\documentclass[lettersize,journal]{IEEEtran}
\usepackage{amsmath,amsfonts}
\usepackage{algorithmic}
\usepackage{algorithm}
\usepackage{array}
\usepackage[caption=false,font=footnotesize]{subfig}
\usepackage{textcomp}
\usepackage{stfloats}
\usepackage{url}
\usepackage{verbatim}
\usepackage{graphicx}

\usepackage{cite}        
\usepackage{amsmath}     
\usepackage{amsthm}
\usepackage{amssymb}     
\usepackage{amsfonts}    
\usepackage{bm}          

\usepackage{algorithm}
\usepackage{algorithmic} 

\usepackage[most]{tcolorbox}

\usepackage{booktabs}    
\usepackage{multirow}    
\usepackage{array}       

\usepackage{float}
\usepackage{stfloats}
\usepackage{lipsum} 
\usepackage{multicol}
\usepackage{multirow}
\usepackage{graphicx}

\usepackage{textcomp}
\usepackage{booktabs}
\usepackage{bm}
\usepackage[export]{adjustbox}

\usepackage{booktabs}
\usepackage{multicol}
\usepackage{multirow}
\usepackage{gensymb}
\usepackage{ragged2e}
\usepackage{tabularx}
\usepackage{makecell}

\newtheorem{proposition}{Proposition}

\newtheorem{remark}{Remark} 

\begin{document}

	\title{\huge Tri-Hybrid Beamforming for T-RIS-Enabled Base Station}
	
	\author{Hongtao~Zhang,~\IEEEmembership{Senior~Member,~IEEE,}
		Chenlong~Ding,~\IEEEmembership{Member,~IEEE}
		
		\thanks{H.~Zhang and C.~Ding are with Beijing University of Posts and Telecommunications, Beijing 100876, China (e-mail: htzhang@bupt.edu.cn; dcl@bupt.edu.cn)}

	}

	\markboth{}
	{Shell \MakeLowercase{\textit{et al.}}: Bare Demo of IEEEtran.cls for IEEE Communications Society Journals}
	
	\maketitle

	\begin{abstract}
	Transmissive reconfigurable intelligent surfaces (T-RISs) integrated into the transmitter provide a viable realization of tri-hybrid multiple-input multiple-output (MIMO), where spatial processing is distributed across the digital, analog radio-frequency (RF), and electromagnetic (EM) domains. However, unlike conventional RIS-assisted links, a transmitter-native T-RIS directly participates in radiation formation, making T-RIS front-end modeling and weighted sum-rate (WSR)-oriented joint precoding challenging under practical hardware constraints. 
	This paper develops a unified modeling and precoding framework for transmitter-native T-RIS tri-hybrid multi-user (MU) downlink transmission. 
	Specifically, starting from a continuous-field description, the received field is characterized by the interaction among the feed array, the programmable T-RIS aperture, and the user-side propagation, leading to a cascaded baseband input-output model for MU precoding. 
	Furthermore, the same representation is instantiated in the Fresnel, Fraunhofer, and mixed-field regimes, so that near-field focusing and far-field angular steering can be handled within one front-end model. 
	Additionally, a WSR maximization problem is formulated over the digital precoder, analog network, and T-RIS coefficients under power and quantization constraints. 
	A two-level solver is then developed by coupling outer weighted minimum mean-square error (WMMSE) updates with WMMSE-induced aperture-field shaping and hardware projection. 
	Simulations validate the modeling accuracy and convergence, and show that the proposed full tri-hybrid design improves WSR over baselines while suppressing mixed-field cross-regime leakage.

	\end{abstract}
	
	\begin{IEEEkeywords}
		Transmissive reconfigurable intelligent surface, tri-hybrid MIMO, near- and far-field communication, weighted sum-rate maximization, joint precoding.
	\end{IEEEkeywords}
	
	\IEEEpeerreviewmaketitle
	
	\section{Introduction}
	
	Electrically large multiple-input multiple-output (MIMO) arrays are expected to play an important role in future wireless systems, since enlarging the effective aperture can improve beamforming gain and spatial resolution \cite{Cui2023NearField}. As the aperture continues to grow, however, the channel response is no longer fully described by conventional far-field array abstractions, and continuous-aperture propagation effects become increasingly important \cite{Gong2024Holographic}. At the same time, increasing the number of radio frequency (RF) chains and analog components in proportion to the aperture size is difficult in practice because of the associated power consumption, hardware cost, and calibration overhead.
	
	Programmable electromagnetic (EM) front ends provide a possible way to relieve this tension. Reconfigurable massive MIMO has shown that part of the spatial processing can be shifted from conventional RF-domain circuitry to the EM domain \cite{Ying2024ReconfigurableMassive}. The recently proposed tri-hybrid MIMO architecture further organizes this idea into a three-layer structure consisting of digital, analog, and EM-domain processing \cite{Heath2026TriHybrid}. Reconfigurable-antenna-based tri-hybrid designs also suggest that low-power EM control can provide useful aperture flexibility without requiring a proportional increase in RF chains \cite{Castellanos2026TriHybrid}. In this context, transmitter-side transmissive reconfigurable intelligent surfaces (T-RISs) offer a particularly attractive realization, because the transmissive surface can be integrated into the base-station front end and directly used as a programmable radiating aperture \cite{Liu2026SphericalTRISBS}.
	
	Unlike an environmental reflective RIS \cite{He2026RobustBeamforming} used as an auxiliary scatterer, the T-RIS is integrated into the base station transmitter front end. It is illuminated by a compact feed array and acts as a programmable radiating aperture. Hence, the effective downlink response is jointly determined by the baseband precoder, the analog feed network, the aperture illumination, the transmissive coefficients, and the aperture-to-user propagation. This front-end coupling is difficult to capture with conventional RIS-assisted channel models and cannot be fully exploited through separate digital, analog, or surface-domain optimization. A dedicated transmitter-native T-RIS model and a joint tri-hybrid precoding design are therefore needed for multi-user (MU) communication under practical hardware constraints.
	
	\vspace{-0.3cm}
	\subsection{Related Works}
	
	RIS technologies have been widely studied as programmable surfaces for reshaping wireless propagation \cite{Liu2024MetaAtom} and improving coverage \cite{Wu2024IntelligentSurfaces}. Recent deployment-oriented studies have further considered indoor multi-metasurface placement under blockage \cite{Liu2025CompatibleMetasurfaces}, aerial RIS-assisted multi-cell transmission \cite{Liu2024DynamicAerialRIS}, and rotatable IRS coverage with experimental verification \cite{Liu2025MaxMinCoverage}. These works provide useful insights into propagation control and deployment, but the surface is mainly treated as an external propagation modifier rather than as part of the transmitter radiation process.
	
	Transmissive and omni-directional metasurface systems have also been explored for transparent transmission and dual-side coverage. Urban intelligent metasurfaces have been used to support transmission through blocked environments \cite{Liu2025UrbanIMS}. Wall-embedded dynamic IOSs further show how transmissive surfaces may be scheduled and deployed inside indoor structures \cite{Liu2026WallIOS}. IOS-aided cell-free networks reveal that coupled transmission/reflection phase shifts can strongly affect cooperative beamforming \cite{Zhu2024IOSCellFree}. These studies are closely related to transmissive surface control, but they mainly focus on propagation enhancement rather than on MU WSR-oriented precoding for an air-fed T-RIS integrated into the transmitter front end.
	
	The emerging tri-hybrid literature is closer to the present work because it explicitly introduces an EM-domain control layer in addition to digital and analog processing. Recent studies have considered MU precoding with reconfigurable antennas \cite{Zheng2025TriHybridERA}. Tri-timescale beamforming has also been investigated to reduce the configuration burden across different control layers \cite{Liu2025TriTimescale}. Alternative front ends, such as radiation-center-antenna \cite{Li2026RCRAATriHybrid} and pinching-antenna \cite{Chen2026Pinching} assisted architectures, further indicate that tri-hybrid control is becoming a practical communication design direction. 
	
	When a T-RIS is deployed as part of base station \cite{Liu2026SphericalTRISBS, Li2025TMATRTC}, its effective downlink response is shaped by feed-to-aperture illumination, programmable transmission over the aperture, and aperture-to-user field propagation. Existing RIS-assisted and tri-hybrid MIMO models do not fully account for these coupled effects, leaving the following issues open:

	\begin{itemize}
		\item \emph{A physically meaningful yet tractable T-RIS front-end model is still lacking.}
		Existing T-RIS prototypes \cite{Tang2023TRISPrototype} and transceiver \cite{Li2024TRISTransceiver} studies mainly validate the feasibility of transmitter-side transmissive surfaces, while EM-oriented RIS models reveal the dependence on incidence, local response, and observation direction \cite{Mi2024AnalyticalEM,Chen2025AngleSensitiveRIS}. 
		However, these works do not convert the feed illumination, transmissive aperture modulation, and aperture-to-user propagation into a tractable MU input-output model for precoding design.
%
		
		\item \emph{The near-/far-field transition has not been incorporated into T-RIS aided tri-hybrid design.}
		Near-/far-field studies show that large apertures change the power scaling \cite{Yi2025NearFarIRS}, spatial degrees of freedom \cite{Liu2023NearFieldTutorial}, and beamforming behavior of wireless links \cite{Liu2025NearFieldDifferent}. 
		What remains unclear is how the T-RIS front-end factorization should be instantiated in the near, far, and mixed-field regimes, and how the digital, analog, and EM-domain variables interact when near-field focusing and far-field angular steering coexist.

		\item \emph{WSR-oriented tri-hybrid precoding is not available for the T-RIS aperture field.}
		Classical WMMSE methods optimize digital precoders over a given channel \cite{Zhao2023RethinkingWMMSE}, and RIS-aided WMMSE designs usually operate on abstract cascaded channels \cite{Choi2024WMMSE}. 
		Recent tri-hybrid beamforming methods further optimize digital, analog, and EM-domain variables for other reconfigurable front ends \cite{Zheng2025TriHybridERA,Chen2026Pinching}. 
		They do not show how the MU WSR objective should be mapped to the T-RIS aperture field and then projected onto practical analog and surface configurations.

	
	\end{itemize}
	
	\begin{figure}[t]
		\centering
		\includegraphics[width=\linewidth]{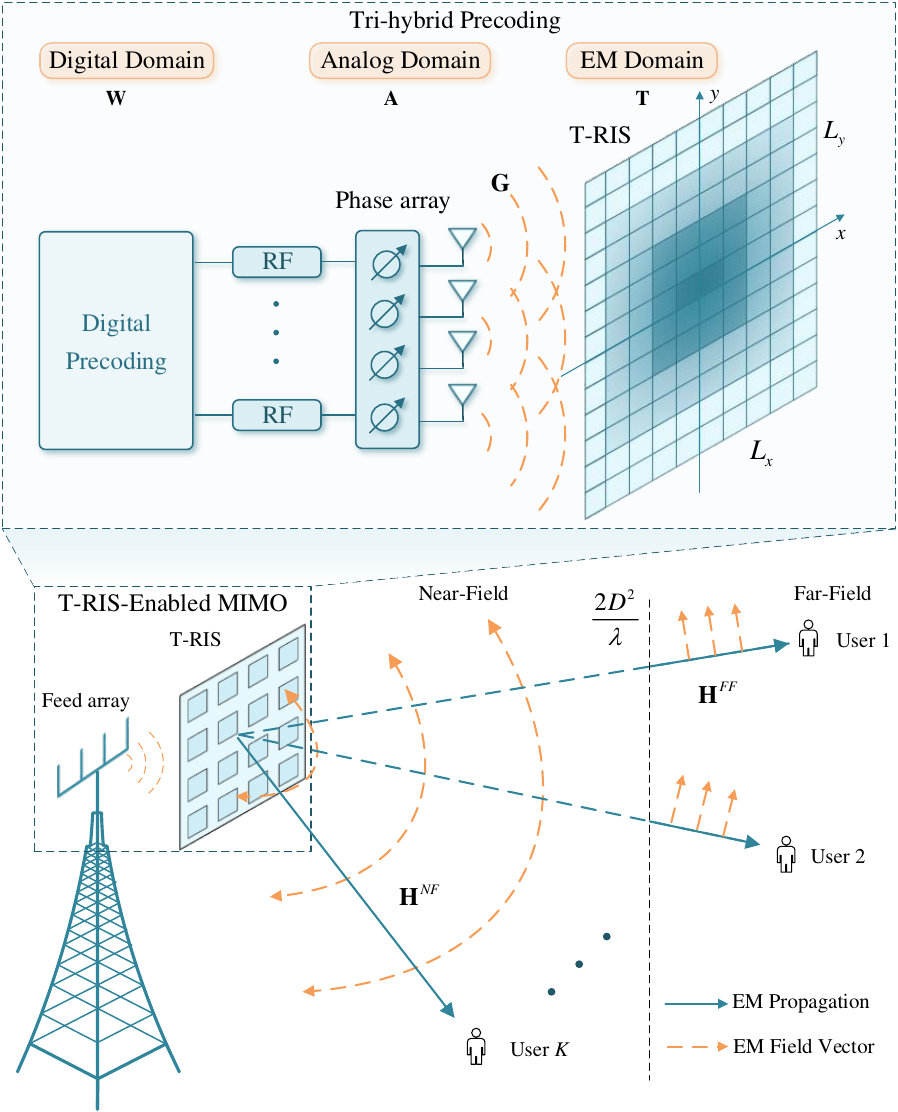}
		\caption{T-RIS-aided tri-hybrid MU downlink architecture. A compact feed array, an analog RF network, and a programmable transmissive aperture jointly shape the transmitted wavefront across the digital, analog, and EM domains.}
		\label{fig:system_model}
	\end{figure}
	
	\subsection{Contributions}
	
	This paper addresses the above gaps by linking the feed-illuminated T-RIS front end with MU WSR-oriented tri-hybrid precoding. 
	The proposed framework models the aperture radiation process, accommodates near-, far-, and mixed-field propagation, and jointly optimizes the digital, analog, and aperture-domain variables under practical hardware constraints. 
	The main contributions are as follows:
	\begin{itemize}
		\item \textit{A transmitter-native T-RIS front-end model is built from the aperture radiation process rather than from an abstract cascaded channel.}
		The feed illumination, local transmissive modulation, and aperture-to-user propagation are described in a continuous-field form. 
		The resulting model is then converted into a cascaded baseband input-output relation for MU communication design. 
		This representation preserves the main physical dependencies of the T-RIS aperture while keeping the model tractable for precoding.
		
		\item \textit{Near-, far-, and mixed-field MU transmission are unified within T-RIS front-end representation.}
		The aperture-to-user operator is specialized to the Fresnel and Fraunhofer regimes, showing how range-dependent focusing and angular steering arise from the same T-RIS front-end model. 
		The formulation is further extended to mixed-field user configurations, where near-field and far-field users share the same digital, analog, and aperture-domain variables. 
		Based on this model, a WSR maximization problem is formulated under transmit-power, quantization, and feasible-state constraints.
		
		\item \textit{A WSR-driven tri-hybrid solver is developed via WMMSE-induced aperture-field updates.}
		The nonconvex problem is handled by a two-level procedure. 
		The outer layer updates the digital precoder through WMMSE-type iterations, while the inner layer translates the communication objective into hardware-projected aperture-field updates for the analog network and T-RIS coefficients. 
		Simulations results demonstrate that the tri-hybrid precoding algorithm exhibits stable convergence and consistently outperforms the general alternating optimization algorithm and the ablation baselines in terms of weighted sum-rate (WSR) gain and low complexity.
	\end{itemize}

	\section{System Model}\label{sec:front_end}
	\vspace{-0.1cm}
	
	We consider narrowband downlink transmission at carrier wavelength $\lambda$ and wavenumber $k_0=2\pi/\lambda$. As shown in Fig.~\ref{fig:system_model}, the T-RIS aided tri-hybrid MIMO base station employs $R$ RF chains and serves $K$ single-antenna users located at $\{\mathbf u_k\in\mathbb R^3\}_{k=1}^{K}$.
	The T-RIS is a rectangular planar aperture $\mathcal S$ located on the $xy$-plane, i.e., $z=0$, and centered at the origin.
	Let its side lengths be $L_x$ and $L_y$, respectively, and define the characteristic aperture size
	$
	D \triangleq \max\{L_x,L_y\}.
	$
	The aperture is discretized into $N$ physical unit cells with centers $\{\mathbf r_n=[x_n,y_n,0]^{\mathsf T}\}_{n=1}^N$ and cell areas $\{\Delta s_n\}_{n=1}^N$.
	\(d_s\) denotes the aperture sampling pitch, with
	\(N=N_xN_y\) and \(\Delta s_n=d_s^2\).
	A compact feed array with $M$ radiating elements is placed at locations $\{\mathbf p_m\}_{m=1}^{M}$, typically on the source side of the T-RIS.
	
	Let $\mathbf s\in\mathbb C^{K\times 1}$ denote the data-symbol vector with
	$
	\mathbb E[\mathbf s\mathbf s^{\mathsf H}]=\mathbf I_K.
	$
	The digital precoder is denoted by $\mathbf W\in\mathbb C^{R\times K}$ and the analog RF network by $\mathbf A\in\mathbb C^{M\times R}$.
	The resulting feed excitation vector is
	\begin{equation}\label{eq:xf}
		\mathbf x_f=\mathbf A\mathbf W\mathbf s \in\mathbb C^{M\times 1}.
	\end{equation}
	
	The transmit-power constraint is
	\begin{equation}\label{eq:power}
		\mathbb E\!\left[\|\mathbf x_f\|_2^2\right]
		=\|\mathbf A\mathbf W\|_F^2
		\le P_{\max}.
	\end{equation}
	
	Each T-RIS cell is configured through a local transmissive coefficient.
	The local coefficient is written as
	$
	t_n, n=1,\ldots,N.
	$
	Stacking all coefficients yields the diagonal T-RIS operator
	\begin{equation}\label{eq:T_diag}
		\mathbf T \triangleq \mathrm{diag}(t_1,\ldots,t_N)\in\mathbb C^{N\times N}.
	\end{equation}
	
	\vspace{-0.6cm}
	\subsection{Feed Illumination Modeling}\label{subsec:illumination}
	
	The first building block of the T-RIS model is the incident field generated by the feed array.
	Unlike environmental RIS settings that often assume an externally given channel or a plane-wave illumination, the T-RIS considered here is directly illuminated by a compact feed array and therefore requires an explicit \emph{feed-to-aperture} description.
	
	Let $\mathbf r_s=[x_s,y_s,0]^{\mathsf T}\in\mathcal S$ denote a generic point on the aperture.
	The incident field on the T-RIS is modeled as the superposition of the contributions from all feeds:
	\begin{equation}\label{eq:Einc_cont}
		E^{\rm inc}(\mathbf r_s)
		=
		\sum_{m=1}^{M} x_{f,m}\, g_f(\mathbf r_s;\mathbf p_m),
	\end{equation}
	where $x_{f,m}$ is the excitation of feed $m$ and $g_f(\mathbf r_s;\mathbf p_m)$ denotes the illumination kernel from feed $m$ to aperture point $\mathbf r_s$.
	A physically meaningful form is
	\begin{equation}\label{eq:gf_param}
		g_f(\mathbf r_s;\mathbf p_m)
		=
		\sqrt{G_m\!\big(\theta_m(\mathbf r_s),\phi_m(\mathbf r_s)\big)}
		\frac{e^{-jk_0 R_m(\mathbf r_s)}}{4\pi R_m(\mathbf r_s)},
	\end{equation}
	where
	$
	R_m(\mathbf r_s)\triangleq \|\mathbf r_s-\mathbf p_m\|_2
	$
	is the feed-to-aperture distance,
	$G_m\!\big(\theta_m(\mathbf r_s),\phi_m(\mathbf r_s)\big)$ is the feed radiation pattern toward $\mathbf r_s$.
	
	\vspace{-0.1cm}
	\subsection{Continuous-Field of T-RIS and Equivalent Receive Model}\label{subsec:continuous_field}
	
	Under the scalar and propagating-wave assumptions, the received field at user location $\mathbf u_k=[x_k,y_k,z_k]^{\mathsf T}$ with $z_k>0$ is modeled by the Rayleigh-Sommerfeld (RS) diffraction integral \cite{Mi2024AnalyticalEM}
	\begin{equation}\label{eq:RS_cont}
		E(\mathbf u_k)
		=
		\iint_{\mathcal S}
		T(\mathbf r_s)\,
		E^{\rm inc}(\mathbf r_s)\,
		\kappa(\mathbf u_k,\mathbf r_s)\,
		\mathrm d s,
	\end{equation}
	where $T(\mathbf r_s)$ is the local transmissive coefficient at $\mathbf r_s$ and $\kappa(\mathbf u_k,\mathbf r_s)$ denotes the aperture-to-user propagation kernel.
	
	For a planar aperture with normal along the $+z$ direction, we use the radiative Rayleigh-Sommerfeld kernel
	\begin{equation}\label{eq:kernel_def}
		\kappa(\mathbf u,\mathbf r_s)
		\triangleq
		\frac{1}{j\lambda}\,
		\frac{z}{R(\mathbf u,\mathbf r_s)}\,
		\frac{e^{-jk_0R(\mathbf u,\mathbf r_s)}}{R(\mathbf u,\mathbf r_s)},
	\end{equation}
	where
	$
	R(\mathbf u,\mathbf r_s)\triangleq \|\mathbf u-\mathbf r_s\|_2
	$
	and $z$ is the $z$-coordinate of $\mathbf u$.
	The factor $e^{-jk_0R}/R$ represents spherical-wave propagation, while $z/R$ is the obliquity factor associated with the aperture normal.
	
	To integrate the continuous-field model into a communication-theoretic framework, we next map the user-side field to a complex baseband observation.
	For a single-antenna user under the scalar model, we write
	$\label{eq:disc_rx}
	y_k=\alpha_{\rm rx}E(\mathbf u_k)+n_k,
	n_k\sim\mathcal{CN}(0,\sigma^2),
	$
	where $\alpha_{\rm rx}$ collects the fixed receive-side response, including the selected polarization component, effective receive scaling, and baseband normalization.
	
	We now discretize the continuous aperture representation into a cascaded matrix model suitable for communication design.
	Applying midpoint quadrature over the $N$ physical cells yields 
	\begin{equation}\label{eq:sum_form}
		E(\mathbf u_k)
		\approx
		\sum_{n=1}^{N}
		\Big(
		t_n\,E^{\rm inc}(\mathbf r_n)
		\Big)\,
		\kappa(\mathbf u_k,\mathbf r_n)\,
		\Delta s_n.
	\end{equation}
	
	Define the sampled incident-field vector
	$
	\mathbf e_{\rm inc}
	\triangleq
	\begin{bmatrix}
		E^{\rm inc}(\mathbf r_1) & \cdots & E^{\rm inc}(\mathbf r_N)
	\end{bmatrix}^{\mathsf T}\in\mathbb C^{N\times 1}.
	$
	From \eqref{eq:Einc_cont}, its discrete form can be written as
	$
	\mathbf e_{\rm inc}=\mathbf G\mathbf x_f,
	$
	where the feed-illumination matrix $\mathbf G\in\mathbb C^{N\times M}$ is defined by
	$
	[\mathbf G]_{n,m}\triangleq g_f(\mathbf r_n;\mathbf p_m).
	$
	After T-RIS modulation, the transmitted aperture-field vector is
	$
	\mathbf e_{\rm tr}=\mathbf T\mathbf e_{\rm inc}=\mathbf T\mathbf G\mathbf x_f.
	$
	Next, define the aperture-to-user propagation matrix $\mathbf H\in\mathbb C^{K\times N}$ by
	\begin{equation}\label{eq:H_def}
		[\mathbf H]_{k,n}
		\triangleq
		\alpha_{\rm rx}\,
		\kappa(\mathbf u_k,\mathbf r_n)\,
		\Delta s_n.
	\end{equation}
	
	Then the user-side baseband vector becomes
	\begin{equation}\label{eq:cascaded}
		\mathbf y
		=
		\mathbf H\mathbf e_{\rm tr}+\mathbf n
		=
		\mathbf H\mathbf T\mathbf G\mathbf x_f+\mathbf n
		=
		\mathbf H\mathbf T\mathbf G\mathbf A\mathbf W\mathbf s+\mathbf n.
	\end{equation}
	
	It is convenient to define the tri-hybrid effective front-end channel
	$
	\mathbf H_{\rm eff}
	\triangleq
	\mathbf H\mathbf T\mathbf G\mathbf A
	\in\mathbb C^{K\times R},
	$
	so that
	$
	\mathbf y=\mathbf H_{\rm eff}\mathbf W\mathbf s+\mathbf n.
	$
	Eq.~\eqref{eq:cascaded} is the desired optimization-ready front-end representation.
	
	
	\begin{remark}
		The cascaded model in \eqref{eq:cascaded} is a physics-informed, optimization-ready reduced-order representation rather than a physical exact model. Its validity relies on the scalar and local-response assumptions in \eqref{eq:RS_cont}-\eqref{eq:kernel_def}, together with sufficiently fine aperture sampling so that the field and propagation kernel are locally smooth within each cell.
		The RS-based operator in (7)-(10) captures the deterministic aperture-to-user response of a transmitter-native T-RIS front end. Additional stochastic NLoS components can be incorporated into \(\mathbf H\), while the proposed optimization framework remains unchanged. We focus on the deterministic component to highlight the aperture-domain physics and near-/far-field factorization.
	\end{remark}

	
	\section{Channel Specialization and Problem Formulation}\label{sec:regime_wsr}
	
	Building on the front-end model in Section~\ref{sec:front_end}, this section shows how the same cascaded operator structure specializes to different propagation regimes and how it leads to a WSR-oriented MU downlink formulation.
	
	For user $k$, define the distance from the aperture center as
	$
	d_k \triangleq \|\mathbf u_k\|_2 .
	$
	A standard ruler for separating the radiative near-field and far-field regions is the Rayleigh distance
	$
	d_R \triangleq \frac{2D^2}{\lambda}.
	$
	
	Accordingly, we distinguish the following two regimes.
	(i) \textbf{Fraunhofer (far-field) regime:} $d_k \gg d_R$, where the wavefront over the aperture is well approximated by a locally planar phase profile.
	(ii) \textbf{Fresnel (radiative near-field) regime:} $d_k \lesssim d_R$, where the quadratic phase variation across the aperture must be retained.
	
	\subsection{Fresnel and Fraunhofer Specializations of the Propagation Operator}\label{subsec:H_specialization}
	
	We now specialize the propagation matrix $\mathbf H$ defined in \eqref{eq:H_def}.
	Let
	$
	R_{k,n}\triangleq \|\mathbf u_k-\mathbf r_n\|_2,
	$
	and define the direction unit vector
	$
	\hat{\mathbf u}_k \triangleq \frac{\mathbf u_k}{d_k}.
	$
	For electrically large apertures with $\|\mathbf r_n\|_2 \ll d_k$, the distance $R_{k,n}$ admits the second-order expansion \cite{Lu2024Tutorial}
	\begin{equation}\label{eq:R_expand}
		R_{k,n}
		\approx
		d_k
		-\hat{\mathbf u}_k^{\mathsf T}\mathbf r_n
		+\frac{\|\mathbf r_n\|_2^2-\big(\hat{\mathbf u}_k^{\mathsf T}\mathbf r_n\big)^2}{2d_k}.
	\end{equation}
	
	This expression makes the transition from a linear phase law to a quadratic phase law explicit.
	
	\subsubsection{Fraunhofer specialization}\label{subsubsec:fraunhofer_new}
	
	When $d_k \gg d_R$, the quadratic term in \eqref{eq:R_expand} becomes negligible.
	In this case,
	$
	e^{-jk_0 R_{k,n}}
	\approx
	e^{-jk_0 d_k}\,
	e^{+jk_0 \hat{\mathbf u}_k^{\mathsf T}\mathbf r_n},
	$
	and we also use the amplitude approximations
	$
	\frac{1}{R_{k,n}} \approx \frac{1}{d_k},
	\frac{z_k}{R_{k,n}} \approx \frac{z_k}{d_k},
	$
	where $z_k$ is the $z$-coordinate of $\mathbf u_k$.
	
	Then, we obtain the far-field propagation entry
	\begin{equation}\label{eq:H_ff}
		[\mathbf H^{\rm FF}]_{k,n}
		\approx
		\alpha_{\rm rx}\,
		\frac{1}{j\lambda}\,
		\frac{z_k}{d_k^2}\,
		e^{-jk_0 d_k}\,
		e^{+jk_0 \hat{\mathbf u}_k^{\mathsf T}\mathbf r_n}\,
		\Delta s_n .
	\end{equation}
	
	Hence, in the Fraunhofer regime, the propagation matrix behaves like a Fourier-type operator whose aperture phase is approximately linear in $\mathbf r_n$.
	
	\subsubsection{Fresnel specialization}\label{subsubsec:fresnel_new}
	
	When $d_k \lesssim d_R$, the quadratic term in \eqref{eq:R_expand} must be retained.
	Substituting \eqref{eq:R_expand} into the spherical phase term yields
	\begin{align}\label{eq:nf_phase}
		e^{-jk_0 R_{k,n}}
		\approx\;
		&e^{-jk_0 d_k}\,
		e^{+jk_0 \hat{\mathbf u}_k^{\mathsf T}\mathbf r_n} \nonumber\\
		&\times
		\exp\!\left(
		-j\frac{k_0}{2d_k}
		\Big[
		\|\mathbf r_n\|_2^2
		-
		\big(\hat{\mathbf u}_k^{\mathsf T}\mathbf r_n\big)^2
		\Big]
		\right).
	\end{align}
	
	Using again $1/R_{k,n}\approx 1/d_k$ and $z_k/R_{k,n}\approx z_k/d_k$ at the amplitude level, the Fresnel propagation entry becomes
	\begin{align}\label{eq:H_nf}
		[\mathbf H^{\rm NF}]_{k,n}
		\approx\;
		&\alpha_{\rm rx}\,
		\frac{1}{j\lambda}\,
		\frac{z_k}{d_k^2}\,
		e^{-jk_0 d_k}\,
		e^{+jk_0 \hat{\mathbf u}_k^{\mathsf T}\mathbf r_n} \nonumber\\
		&\times
		\exp\!\left(
		-j\frac{k_0}{2d_k}
		\Big[
		\|\mathbf r_n\|_2^2
		-
		\big(\hat{\mathbf u}_k^{\mathsf T}\mathbf r_n\big)^2
		\Big]
		\right)
		\Delta s_n .
	\end{align}
	
	The additional quadratic phase term is responsible for near-field focusing behavior and provides extra spatial discrimination beyond conventional far-field angular steering.
	
	Under both the Fraunhofer and Fresnel approximations, the T-RIS-aided tri-hybrid input-output relation preserves the same cascaded structure
	\begin{equation}\label{eq:unified_regime_model}
		\mathbf y
		=
		\mathbf H^{(\rho)}\mathbf T\mathbf G\mathbf A\mathbf W\mathbf s+\mathbf n,
		\qquad
		\rho\in\{{\rm FF},{\rm NF}\},
	\end{equation}
	where only the propagation matrix $\mathbf H^{(\rho)}$ changes, with $\mathbf H^{\rm FF}$ given by \eqref{eq:H_ff} and $\mathbf H^{\rm NF}$ given by \eqref{eq:H_nf}.
	
	
	Eq.~\eqref{eq:unified_regime_model} shows that the regime dependence does not alter the tri-hybrid front-end factorization itself.
	Instead, it only changes the specific realization of the propagation operator.
	Accordingly, define the regime-instantiated effective channel
	\begin{equation}\label{eq:Heff_rho}
		\mathbf H_{\rm eff}^{(\rho)}
		\triangleq
		\mathbf H^{(\rho)}\mathbf T\mathbf G\mathbf A,
		\qquad
		\rho\in\{{\rm FF},{\rm NF}\}.
	\end{equation}
	

	Importantly, the same modeling logic also extends to \emph{mixed-field} MU downlink, where different users may reside in different propagation regimes.
	Let $\mathcal K_{\rm NF}$ and $\mathcal K_{\rm FF}$ denote the near-field and far-field user sets, respectively, with $\mathcal K_{\rm NF}\cup\mathcal K_{\rm FF}=\{1,\ldots,K\}$.
	Then a mixed-field propagation operator can be written row-wise as
	\begin{equation}\label{eq:H_mix}
		\mathbf H_{\rm mix}
		\triangleq
		\begin{bmatrix}
			\mathbf H^{\rm NF}_{\mathcal K_{\rm NF}}\\
			\mathbf H^{\rm FF}_{\mathcal K_{\rm FF}}
		\end{bmatrix},
	\end{equation}
	and the corresponding effective channel is
	\begin{equation}\label{eq:Heff_mix}
		\mathbf H_{\rm eff}^{\rm mix}
		\triangleq
		\mathbf H_{\rm mix}\mathbf T\mathbf G\mathbf A.
	\end{equation}
	
	
	
	\subsection{WSR Maximization Under Practical Hardware Constraints}\label{subsec:wsr_formulation}
	
	Once the regime-specific or mixed-field propagation matrix is specified, the corresponding effective channel is fully determined.
	For notational simplicity, we use a unified symbol $\mathbf H_{\rm eff}$ in the rest of this section, with the understanding that it can stand for $\mathbf H_{\rm eff}^{\rm FF}$, $\mathbf H_{\rm eff}^{\rm NF}$, or $\mathbf H_{\rm eff}^{\rm mix}$.
	
	Let $\mathbf h_k^{\mathsf H}$ denote the $k$-th row of $\mathbf H_{\rm eff}$, and let $\mathbf w_k$ denote the $k$-th column of the digital precoder $\mathbf W$.
	Then the received signal at user $k$ is
	\begin{equation}\label{eq:yk_split}
		y_k
		=
		\mathbf h_k^{\mathsf H}\mathbf w_k s_k
		+
		\sum_{i\neq k}\mathbf h_k^{\mathsf H}\mathbf w_i s_i
		+
		n_k,
	\end{equation}
	where $n_k\sim\mathcal{CN}(0,\sigma^2)$.
	Hence, the signal-to-interference-plus-noise ratio (SINR) of user $k$ is
	\begin{equation}\label{eq:sinr_regime}
		\gamma_k
		=
		\frac{|\mathbf h_k^{\mathsf H}\mathbf w_k|^2}
		{\sum_{i\neq k}|\mathbf h_k^{\mathsf H}\mathbf w_i|^2+\sigma^2}.
	\end{equation}
	
	
		The corresponding weighted sum-rate is
	\begin{equation}\label{eq:wsr_regime}
		R_{\rm WSR}
		\triangleq
		\sum_{k=1}^{K}\mu_k \log_2\!\big(1+\gamma_k\big),
	\end{equation}
	where $\mu_k\ge 0$ is the priority weight assigned to user $k$ and $R_{\rm WSR}$ is measured in bit/s/Hz.
	Eq.~\eqref{eq:wsr_regime} is therefore not based on a new channel model separate from Section~\ref{sec:front_end}, but directly on a regime-specialized or mixed-field version of the same front-end representation.
	
	We now formulate the tri-hybrid design problem.
	The decision variables are the digital precoder $\mathbf W$, the analog network $\mathbf A$, and the T-RIS coefficients embedded in $\mathbf T$.
	
	For the analog network, we consider finite-resolution phase shifters.
	Let the feasible alphabet be
	\begin{equation}\label{eq:A_alphabet}
		\mathcal A_\beta
		\triangleq
		\left\{
		\frac{1}{\sqrt M}e^{j\frac{2\pi q}{2^\beta}}
		\,\bigg|\,
		q=0,1,\ldots,2^\beta-1
		\right\},
	\end{equation}
	where $\beta$ is the number of phase-shifter bits.
	Thus, each analog coefficient satisfies
	$
	[\mathbf A]_{m,r}\in\mathcal A_\beta,
	\qquad
	\forall\,m=1,\ldots,M,\ \forall\,r=1,\ldots,R.
	$
	
	For the T-RIS, we allow a general finite feasible set per cell \cite{Liu2025Transmitarray}:
	$
	t_n\in\mathcal T_n,
	n=1,\ldots,N,
	$
	where $\mathcal T_n$ can represent a low-resolution quantized phase set. 
	Under $b_T$-bit phase-only control with fixed modulus $\alpha_n$,
	\begin{equation}\label{eq:T_phase_only_example}
		\mathcal T_n
		=
		\left\{
		\alpha_n e^{j\frac{2\pi q}{2^{b_T}}}
		\,\bigg|\,
		q=0,1,\ldots,2^{b_T}-1
		\right\}.
	\end{equation}
	
	Combining \eqref{eq:power}, \eqref{eq:sinr_regime}, and the hardware constraints above, the WSR maximization problem is formulated as
	\begin{subequations}\label{prob:wsr_main}
		\begin{align}
			\max_{\mathbf W,\mathbf A,\mathbf T}\quad
			& \sum_{k=1}^{K}\mu_k \log_2\!\big(1+\gamma_k\big) \label{prob:wsr_main_obj}\\
			\text{s.t.}\quad
			& \|\mathbf A\mathbf W\|_F^2 \le P_{\max}, \label{prob:wsr_main_power}\\
			& [\mathbf A]_{m,r}\in\mathcal A_\beta,
			\quad \forall\,m,r, \label{prob:wsr_main_A}\\
			& t_n\in\mathcal T_n,
			\quad \forall\,n. \label{prob:wsr_main_T}
		\end{align}
	\end{subequations}
	

	\section{WSR-Driven Joint Tri-Hybrid Optimization}\label{sec:algo}
	
	%
	
	The WSR maximization in \eqref{prob:wsr_main} is difficult because the digital, analog, and EM variables are multiplicatively coupled through \(\mathbf H_{\rm eff}\), while the analog network and the T-RIS are constrained by finite-resolution or discrete hardware-feasible sets.
	We therefore adopt a two-level decomposition:
	\begin{itemize}
		\item \textbf{Outer communication-layer update:} for fixed \((\mathbf A,\mathbf T)\), update \(\mathbf W\) via WMMSE;
		\item \textbf{Inner front-end update:} for fixed \(\mathbf W\), update \((\mathbf A,\mathbf T)\) by minimizing the WMMSE-induced weighted user-plane coupling error.
	\end{itemize}
	

	
	\subsection{Outer-Layer WMMSE Update for Digital Precoding \(\mathbf W\)}\label{subsec:wmmse}
	
	For fixed \((\mathbf A,\mathbf T)\), the effective channel
	\(\mathbf H_{\rm eff}\) is fixed. We introduce a scalar linear
	equalizer \(u_k\in\mathbb C\) and a positive WMMSE weight
	\(q_k\) for each single-antenna user, and define
	$
	\hat s_k=u_k^*y_k,
	e_k\triangleq \mathbb E\!\left[|\hat s_k-s_k|^2\right].
	$
	Here, \(u_k\) is a baseband scalar equalization coefficient
	applied to the scalar received sample \(y_k\), rather than a
	spatial receive beamformer.
	Using \eqref{eq:yk_split}, we obtain
	\begin{align}\label{eq:mse_expand}
		e_k \!\!
		&=
		|u_k|^2
		\left(
		\sum_{j=1}^{K}|\mathbf h_k^{\mathsf H}\mathbf w_j|^2+\sigma^2
		\right)
		\!\!-2\Re\!\left\{u_k^* \mathbf h_k^{\mathsf H}\mathbf w_k\right\}+1,
	\end{align}
	where 
	
	The WSR maximization is equivalent to the WMMSE problem \cite{Zhao2023RethinkingWMMSE}
	\begin{equation}\label{eq:wmmse_obj}
		\min_{\mathbf W,\{u_k,q_k\}}
		\ \sum_{k=1}^{K}\mu_k\big(q_k e_k \!\!-\!\! \log q_k\big)
		\quad
		\text{s.t.}
		\|\mathbf A\mathbf W\|_F^2\le P_{\max}.
	\end{equation}
	
	For fixed \(\mathbf W\), the MMSE receive equalizer and optimal WMMSE weight are
	$
	u_k^\star
	=
	\frac{\mathbf h_k^{\mathsf H}\mathbf w_k}
	{\sum_{j=1}^{K}|\mathbf h_k^{\mathsf H}\mathbf w_j|^2+\sigma^2},
	\label{eq:u_opt}
	q_k^\star
	=
	\frac{1}{e_k^\star}.
	\label{eq:q_opt}
	$
	Define
	\begin{equation}\label{eq:Omega_def}
		\mathbf \Omega
		\triangleq
		\mathrm{diag}\!\big(
		\mu_1 q_1|u_1|^2,\ldots,\mu_K q_K|u_K|^2
		\big)\in\mathbb C^{K\times K},
	\end{equation}
	and
	\begin{equation}\label{eq:Ydes_def}
		\mathbf Y_{\mathrm{des}}
		\triangleq
		\mathrm{diag}\!\left(
		\frac{1}{u_1^*},\ldots,\frac{1}{u_K^*}
		\right)\in\mathbb C^{K\times K}.
	\end{equation}
	
	Then the digital precoder update is
	\begin{equation}\label{eq:W_closed}
		\mathbf W^\star(\lambda)
		=
		\left(
		\mathbf H_{\mathrm{eff}}^{\mathsf H}\mathbf \Omega\,\mathbf H_{\mathrm{eff}}
		+\lambda\,\mathbf A^{\mathsf H}\mathbf A
		\right)^{-1}
		\mathbf H_{\mathrm{eff}}^{\mathsf H}\mathbf \Omega\,\mathbf Y_{\mathrm{des}},
	\end{equation}
	where \(\lambda\ge 0\) is chosen such that
	$
	\|\mathbf A\mathbf W^\star(\lambda)\|_F^2\le P_{\max}.
	$
	
	The outer-layer update itself is standard.
	Its role here is to generate the communication-aware supervisory quantities \((\mathbf\Omega,\mathbf Y_{\mathrm{des}})\), which directly enter the inner front-end update.
	
	\subsection{Inner T-RIS Front-End Update for \((\mathbf A,\mathbf T)\)}\label{subsec:inner}
	
	
	The main algorithmic principle of this chapter is to preserve the \emph{exact} hardware-layer subproblem induced by WMMSE, instead of replacing it by a heuristic front-end objective.
	
	\begin{proposition}\label{prop:wmmse_inner_exact}
		For fixed \(\{u_k,q_k\}_{k=1}^{K}\) and fixed digital precoder \(\mathbf W\), optimizing \((\mathbf A,\mathbf T)\) in the WMMSE reformulation of \eqref{prob:wsr_main} is equivalent, up to additive constants independent of \((\mathbf A,\mathbf T)\), to minimizing
		\begin{equation}\label{eq:J_inner}
			J_{\mathrm{in}}(\mathbf A,\mathbf T)
			\triangleq
			\left\|
			\mathbf \Omega^{1/2}
			\big(
			\mathbf H\mathbf T\mathbf G\mathbf A\mathbf W
			-
			\mathbf Y_{\mathrm{des}}
			\big)
			\right\|_F^2,
		\end{equation}
		where \(\mathbf H\in\{\mathbf H^{\rm FF},\mathbf H^{\rm NF},\mathbf H_{\rm mix}\}\) is the instantiated propagation operator.
	\end{proposition}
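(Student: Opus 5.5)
The plan is to expand the per-user MSE \eqref{eq:mse_expand} and show that, with \(u_k\) and \(q_k\) held fixed, each \(\mu_k q_k e_k\) equals one weighted row of the Frobenius norm in \eqref{eq:J_inner} plus a term that does not depend on \((\mathbf A,\mathbf T)\). Since \(\mathbf H_{\rm eff}=\mathbf H\mathbf T\mathbf G\mathbf A\) for any instantiated \(\mathbf H\in\{\mathbf H^{\rm FF},\mathbf H^{\rm NF},\mathbf H_{\rm mix}\}\), the dependence on \((\mathbf A,\mathbf T)\) enters only through the rows \(\mathbf h_k^{\mathsf H}\). The argument therefore does not depend on the propagation regime.

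First, I fix \(\{u_k,q_k\}\) and \(\mathbf W\), and assume \(u_k\neq 0\) so that \(\mathbf Y_{\rm des}\) in \eqref{eq:Ydes_def} is well defined. This holds whenever \(\mathbf h_k^{\mathsf H}\mathbf w_k\neq 0\) at the point where \(u_k^\star\) was computed. In the WMMSE objective \eqref{eq:wmmse_obj}, the terms \(-\mu_k\log q_k\) are constants, so only \(\sum_k \mu_k q_k e_k\) matters.

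Next, for each \(k\), let \(c_k\triangleq 1/u_k^*\) and consider the row residual \(\sum_{j=1}^K|\mathbf h_k^{\mathsf H}\mathbf w_j-\delta_{kj}c_k|^2\). Expanding it gives
\begin{equation*}
\sum_{j=1}^K|\mathbf h_k^{\mathsf H}\mathbf w_j|^2-2\Re\{c_k^*\mathbf h_k^{\mathsf H}\mathbf w_k\}+|c_k|^2 .
\end{equation*}
Multiplying by \(|u_k|^2\) and using \(|u_k|^2c_k^*=u_k^*\) and \(|u_k|^2|c_k|^2=1\), this becomes
\begin{equation*}
|u_k|^2\sum_{j}|\mathbf h_k^{\mathsf H}\mathbf w_j|^2-2\Re\{u_k^*\mathbf h_k^{\mathsf H}\mathbf w_k\}+1 .
\end{equation*}
This is exactly \(e_k-|u_k|^2\sigma^2\), by \eqref{eq:mse_expand}. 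Hence
\begin{equation*}
e_k=|u_k|^2\big\|[\mathbf H_{\rm eff}\mathbf W-\mathbf Y_{\rm des}]_{k,:}\big\|_2^2+|u_k|^2\sigma^2 .
\end{equation*}

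Then I weight this identity by \(\mu_k q_k\) and sum over \(k\). Because \(\mathbf\Omega\) in \eqref{eq:Omega_def} is diagonal with entries \(\mu_kq_k|u_k|^2\), the sum of weighted row norms is \(\|\mathbf\Omega^{1/2}(\mathbf H\mathbf T\mathbf G\mathbf A\mathbf W-\mathbf Y_{\rm des})\|_F^2=J_{\rm in}\). The leftover terms are \(\sum_k\mu_kq_k|u_k|^2\sigma^2-\mu_k\log q_k\), which are independent of \((\mathbf A,\mathbf T)\).

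The main point needing care is the constraints rather than the algebra. The hardware constraints \eqref{prob:wsr_main_A}--\eqref{prob:wsr_main_T} carry over unchanged. The power constraint \(\|\mathbf A\mathbf W\|_F^2\le P_{\max}\), however, still depends on \(\mathbf A\) when \(\mathbf W\) is fixed. I would state that the equivalence concerns the objective over the same feasible set, with this constraint retained or later enforced via the rescaling \(\lambda\) in \eqref{eq:W_closed}. In particular, the equivalence does not claim that the constraint disappears.
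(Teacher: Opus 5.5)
Your argument is correct and is essentially the paper's proof: both establish $\sum_k\mu_kq_ke_k=J_{\mathrm{in}}+\sigma^2\sum_k\mu_kq_k|u_k|^2$ by completing the square. You do this row by row, while the paper expands the full weighted sum and the Frobenius norm and matches terms. Your remarks on $u_k\neq 0$ and on the power constraint still depending on $\mathbf A$ are valid points the paper leaves implicit, but the paper restores feasibility after the $(\mathbf A,\mathbf T)$ update through the $\zeta$-scaling in \eqref{eq:W_scale}, not through $\lambda$ in \eqref{eq:W_closed}.
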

	
	\begin{proof}
		The result follows from the standard WMMSE identity and completion of the square \cite{Choi2024WMMSE}.
		A full derivation is provided in Appendix~\ref{app:wmmse_ls}.
	\end{proof}
	
	Fix \(\mathbf W\) and define the user-plane coupling matrix
	$
	\mathbf Y
	\triangleq
	\mathbf H\mathbf T\mathbf G\mathbf A\mathbf W
	\in\mathbb C^{K\times K}.
	$
	Then Proposition~\ref{prop:wmmse_inner_exact} shows that the hardware update is governed by the exact communication-aware weighted least-squares (LS) surrogate in \eqref{eq:J_inner}.
	This is the key bridge from the communication layer to the front-end layer.
	
	\subsubsection{RIS aperture-field $\mathbf E_{\mathrm{tr}} = \mathbf T\mathbf G\mathbf A\mathbf W$ Update}\label{subsec:backprop}
	
	Let
	$
	\mathbf R
	\triangleq
	\mathbf H\mathbf T\mathbf G\mathbf A\mathbf W
	-
	\mathbf Y_{\mathrm{des}}
	\in\mathbb C^{K\times K},
	$
	define
	$
	\mathbf E_{\mathrm{tr}} \triangleq \mathbf T\mathbf E_{\mathrm{inc}},
	\mathbf E_{\mathrm{inc}} \triangleq \mathbf G\mathbf S,
	\mathbf S \triangleq \mathbf A\mathbf W.
	$
	Then \(\mathbf Y=\mathbf H\mathbf E_{\mathrm{tr}}\), and the Wirtinger gradient of \(J_{\mathrm{in}}\) with respect to \(\mathbf E_{\mathrm{tr}}^*\) is
	$\label{eq:grad_E}
	\nabla_{\mathbf E_{\mathrm{tr}}^*}J_{\mathrm{in}}
	=
	\mathbf H^{\mathsf H}\mathbf \Omega\,\mathbf R.
	$
	This yields the aperture-field update
	\begin{equation}\label{eq:Etr_update}
		\mathbf E_{\mathrm{tr}}^{(+)}
		\leftarrow
		\mathbf E_{\mathrm{tr}}-\eta\,\mathbf H^{\mathsf H}\mathbf \Omega\,\mathbf R,
	\end{equation}
	with step size \(\eta>0\).
	
	To analyze the continuous exact-core update in \eqref{eq:Etr_update}, define the surrogate-space objective
	$\label{eq:fE_def}
	f(\mathbf E)
	\triangleq
	\left\|
	\mathbf \Omega^{1/2}
	\big(
	\mathbf H\mathbf E-\mathbf Y_{\mathrm{des}}
	\big)
	\right\|_F^2,
	$
	whose Wirtinger gradient is
	\begin{equation}\label{eq:fE_grad}
		\nabla_{\mathbf E^*} f(\mathbf E)
		=
		\mathbf H^{\mathsf H}\mathbf \Omega
		\big(
		\mathbf H\mathbf E-\mathbf Y_{\mathrm{des}}
		\big).
	\end{equation}

		\begin{proposition}\label{prop:descent_E}
		The gradient in \eqref{eq:fE_grad} is Lipschitz continuous with constant
		\begin{equation}\label{eq:Lipschitz_E}
			L_E=\|\mathbf H^{\mathsf H}\mathbf \Omega \mathbf H\|_2.
		\end{equation}
		
		Consequently, for the unprojected update
		$\label{eq:E_unproj}
			\mathbf E^{(+)}
			=
			\mathbf E-\eta \nabla_{\mathbf E^*} f(\mathbf E),
		$
		one has
		\begin{equation}\label{eq:f_descent_quant}
			f(\mathbf E^{(+)})
			\le
			f(\mathbf E)
			-
			\eta
			\left(
			2-L_E\eta
			\right)
			\left\|
			\nabla_{\mathbf E^*} f(\mathbf E)
			\right\|_F^2 .
		\end{equation}
		
		In particular, for any \(0<\eta\le 1/L_E\),
		\begin{equation}\label{eq:f_descent_simple}
			f(\mathbf E^{(+)})
			\le
			f(\mathbf E)
			-
			\eta
			\left\|
			\nabla_{\mathbf E^*} f(\mathbf E)
			\right\|_F^2
			\le
			f(\mathbf E).
		\end{equation}
	\end{proposition}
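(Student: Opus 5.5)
The plan is to exploit that $f$ is an exact quadratic in $\mathbf E$, so no generic descent lemma is needed: I will expand $f(\mathbf E+\boldsymbol\Delta)$ exactly and bound the single second-order term by the spectral norm of $\mathbf H^{\mathsf H}\mathbf\Omega\mathbf H$.

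\emph{Step 1 (Lipschitz constant).} From \eqref{eq:fE_grad}, for any $\mathbf E_1,\mathbf E_2$,
\[
\nabla_{\mathbf E^*}f(\mathbf E_1)-\nabla_{\mathbf E^*}f(\mathbf E_2)=\mathbf H^{\mathsf H}\mathbf\Omega\mathbf H(\mathbf E_1-\mathbf E_2),
\]
since $\mathbf Y_{\mathrm{des}}$ cancels. The inequality $\|\mathbf M\mathbf X\|_F\le\|\mathbf M\|_2\|\mathbf X\|_F$ then gives Lipschitz continuity with $L_E=\|\mathbf H^{\mathsf H}\mathbf\Omega\mathbf H\|_2$, which is \eqref{eq:Lipschitz_E}. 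I will also note that $\mathbf\Omega$ in \eqref{eq:Omega_def} is diagonal with nonnegative entries $\mu_kq_k|u_k|^2$, because $\mu_k\ge0$ and $q_k>0$. Hence $\mathbf\Omega^{1/2}$ is well defined and $\mathbf H^{\mathsf H}\mathbf\Omega\mathbf H\succeq\mathbf 0$, so $L_E$ equals its largest eigenvalue.

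\emph{Step 2 (exact expansion).} Write $\mathbf R_E\triangleq\mathbf H\mathbf E-\mathbf Y_{\mathrm{des}}$ and expand $f(\mathbf E+\boldsymbol\Delta)=\mathrm{tr}\big((\mathbf R_E+\mathbf H\boldsymbol\Delta)^{\mathsf H}\mathbf\Omega(\mathbf R_E+\mathbf H\boldsymbol\Delta)\big)$. This yields
\[
f(\mathbf E+\boldsymbol\Delta)=f(\mathbf E)+2\,\Re\,\mathrm{tr}\big(\boldsymbol\Delta^{\mathsf H}\nabla_{\mathbf E^*}f(\mathbf E)\big)+\|\mathbf\Omega^{1/2}\mathbf H\boldsymbol\Delta\|_F^2 .
\]
Here the factor $2$ comes from the Wirtinger convention, in which the gradient with respect to $\mathbf E^*$ is half the real gradient. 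I expect this bookkeeping to be the only delicate point, since the constant $2-L_E\eta$ in \eqref{eq:f_descent_quant} depends on getting that factor right.

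\emph{Step 3 (descent bound).} Substitute $\boldsymbol\Delta=-\eta\nabla_{\mathbf E^*}f(\mathbf E)$. The linear term becomes $-2\eta\|\nabla_{\mathbf E^*}f(\mathbf E)\|_F^2$. For the quadratic term, use
\[
\|\mathbf\Omega^{1/2}\mathbf H\boldsymbol\Delta\|_F^2=\mathrm{tr}\big(\boldsymbol\Delta^{\mathsf H}\mathbf H^{\mathsf H}\mathbf\Omega\mathbf H\boldsymbol\Delta\big)\le L_E\|\boldsymbol\Delta\|_F^2=L_E\eta^2\|\nabla_{\mathbf E^*}f(\mathbf E)\|_F^2 .
\]
Combining the two gives \eqref{eq:f_descent_quant}.

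\emph{Step 4 (step-size specialization).} For $0<\eta\le1/L_E$ we have $2-L_E\eta\ge1$, which gives the first inequality in \eqref{eq:f_descent_simple}. Nonnegativity of $\|\nabla_{\mathbf E^*}f(\mathbf E)\|_F^2$ gives the second. If $L_E=0$, then $\mathbf\Omega^{1/2}\mathbf H=\mathbf 0$, so $f$ is constant and the gradient vanishes; the claims then hold trivially for any $\eta>0$.
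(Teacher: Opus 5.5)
Your proof is correct and follows the paper's argument essentially step for step: the Lipschitz bound via $\|\mathbf M\mathbf X\|_F\le\|\mathbf M\|_2\|\mathbf X\|_F$, the exact quadratic expansion of $f(\mathbf E+\mathbf D)$ with the Wirtinger factor $2$ on the linear term, the substitution $\mathbf D=-\eta\nabla_{\mathbf E^*}f(\mathbf E)$, and the step-size specialization. Your observations that $\mathbf H^{\mathsf H}\mathbf\Omega\mathbf H\succeq\mathbf 0$ and that the degenerate case $L_E=0$ is trivial are small additions that the paper leaves implicit.
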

	
	\begin{proof}
		For any two matrices \(\mathbf E_1\) and \(\mathbf E_2\),
		\begin{align}
			\left\|
			\nabla_{\mathbf E^*} f(\mathbf E_1)
			-
			\nabla_{\mathbf E^*} f(\mathbf E_2)
			\right\|_F
			&=
			\left\|
			\mathbf H^{\mathsf H}\mathbf \Omega \mathbf H
			(\mathbf E_1-\mathbf E_2)
			\right\|_F \nonumber\\
			&\le
			\|\mathbf H^{\mathsf H}\mathbf \Omega \mathbf H\|_2
			\|\mathbf E_1-\mathbf E_2\|_F,
		\end{align}
		which establishes \eqref{eq:Lipschitz_E}.
		
		Let
		$
		\mathbf G_E
		\triangleq
		\nabla_{\mathbf E^*}f(\mathbf E)
		=
		\mathbf H^{\mathsf H}\mathbf \Omega
		(\mathbf H\mathbf E-\mathbf Y_{\rm des}).
		$
		For any perturbation \(\mathbf D\), the quadratic structure of \(f(\mathbf E)\) gives
		\begin{align}
			f(\mathbf E+\mathbf D)
			=&\,
			f(\mathbf E)
			+
			2\Re\!\left\{
			\left\langle
			\mathbf G_E,\mathbf D
			\right\rangle
			\right\}
			+
			\left\|
			\mathbf \Omega^{1/2}\mathbf H\mathbf D
			\right\|_F^2 \nonumber\\
			\le&\,
			f(\mathbf E)
			+
			2\Re\!\left\{
			\left\langle
			\mathbf G_E,\mathbf D
			\right\rangle
			\right\}
			+
			L_E\|\mathbf D\|_F^2 .
		\end{align}
		
		By setting \(\mathbf D=-\eta\mathbf G_E\), we obtain
		\begin{align}
			f(\mathbf E^{(+)})
			\le&
			f(\mathbf E)
			-
			2\eta\|\mathbf G_E\|_F^2
			+
			L_E\eta^2\|\mathbf G_E\|_F^2 \nonumber\\
			=&\,
			f(\mathbf E)
			-
			\eta(2-L_E\eta)\|\mathbf G_E\|_F^2,
		\end{align}
		Equation \eqref{eq:f_descent_simple} follows directly when \(0<\eta\le 1/L_E\).
	\end{proof}

	\begin{remark}
		Proposition~\ref{prop:descent_E} applies only to the unprojected continuous surrogate step. After hardware projection on \(\mathbf T\) and \(\mathbf A\), exact surrogate monotonicity is generally no longer guaranteed and is handled by the outer-loop stabilization in Section~\ref{subsec:accept}.
	\end{remark}
	
	\label{subsubsec:ff_realization}
	For fraunhofer operator realization,
	using \eqref{eq:H_ff}, the far-field operator is Fourier-like:
	$\label{eq:ff_forward}
	(\mathbf H^{\rm FF}\mathbf E)_k
	=
	c_k^{\rm FF}
	\sum_{n=1}^{N}
	E_n\,e^{+jk_0 \hat{\mathbf u}_k^{\mathsf T}\mathbf r_n}\,\Delta s_n,
	$
	where
	$
	c_k^{\rm FF}
	\triangleq
	\alpha_{\rm rx}\,
	\frac{1}{j\lambda}\,
	\frac{z_k}{d_k^2}\,
	e^{-jk_0 d_k}.
	$
	Its adjoint is
	\begin{equation}\label{eq:ff_adjoint}
		(\mathbf H^{\rm FF})^{\mathsf H}\mathbf z
		=
		\sum_{k=1}^{K}
		(c_k^{\rm FF})^* z_k
		\mathbf a_k^{\rm FF},
	\end{equation}
	where the \(n\)-th entry of \(\mathbf a_k^{\rm FF}\) is
	$
	[\mathbf a_k^{\rm FF}]_n
	=
	e^{-jk_0 \hat{\mathbf u}_k^{\mathsf T}\mathbf r_n}\Delta s_n.
	$
	
	\label{subsubsec:nf_realization}
	For fresnel operator realization,
	using \eqref{eq:H_nf}, the near-field operator is quadratic-phase/focusing-like:
	$\label{eq:nf_forward}
	(\mathbf H^{\rm NF}\mathbf E)_k
	=
	c_k^{\rm NF}
	\sum_{n=1}^{N}
	E_n\,q_{k,n}^{\rm NF}\,\Delta s_n,
	$
	where
	$
	c_k^{\rm NF}
	\triangleq
	\alpha_{\rm rx}\,
	\frac{1}{j\lambda}\,
	\frac{z_k}{d_k^2}\,
	e^{-jk_0 d_k},
	$
	and
	$
	q_{k,n}^{\rm NF}
	\triangleq
	e^{+jk_0 \hat{\mathbf u}_k^{\mathsf T}\mathbf r_n}
	\exp\!\left(
	-j\frac{k_0}{2d_k}
	\Big[
	\|\mathbf r_n\|_2^2
	-
	\big(\hat{\mathbf u}_k^{\mathsf T}\mathbf r_n\big)^2
	\Big]
	\right).
	$
	Its adjoint is
	\begin{equation}\label{eq:nf_adjoint}
		(\mathbf H^{\rm NF})^{\mathsf H}\mathbf z
		=
		\sum_{k=1}^{K}
		(c_k^{\rm NF})^* z_k
		\mathbf a_k^{\rm NF},
	\end{equation}
	where the \(n\)-th entry of \(\mathbf a_k^{\rm NF}\) is
	$
	[\mathbf a_k^{\rm NF}]_n
	=
	(q_{k,n}^{\rm NF})^*\Delta s_n.
	$
	
	To ensure sufficient decrease of the continuous surrogate before hardware projection, we apply Armijo backtracking to the aperture-field objective
	$\label{eq:fE_armijo}
	f(\mathbf E)
	\triangleq
	\left\|
	\mathbf \Omega^{1/2}
	\big(
	\mathbf H\mathbf E-\mathbf Y_{\mathrm{des}}
	\big)
	\right\|_F^2 .
	$
	For the exact-core update, let
	$
	\mathbf G_E=\mathbf H^{\mathsf H}\mathbf \Omega\,\mathbf R.
	$
	Starting from an initial step size \(\eta_0\), we shrink \(\eta\) according to
	\begin{align}\label{eq:armijo}
		\text{while }\ 
		f(\mathbf E_{\mathrm{tr}}-\eta\mathbf G_E)
		&>
		f(\mathbf E_{\mathrm{tr}})
		-c\,\eta\,\|\mathbf G_E\|_F^2, \nonumber\\
		\eta &\leftarrow \tau_{\rm bt} \eta,
	\end{align}
	where \(0<\tau_{\rm bt}<1\) and \(0<c<1\).
	This guarantees sufficient decrease of the continuous exact-core surrogate before the hardware projection step.
	
	\subsubsection{Mixed-Field Task-Aware Extension}\label{subsec:mixed_task_ext}
	
	Although the exact surrogate in \eqref{eq:J_inner} already accommodates mixed-field users through \(\mathbf H_{\rm mix}\), in mixed-field coexistence the aggregate residual may be dominated by the stronger regime and thus obscure cross-regime leakage. To better exploit this structure while retaining the same WMMSE-induced core, we introduce a block-structured task-aware extension based on the exact mixed-field decomposition below.
	
	When \(\mathbf H=\mathbf H_{\rm mix}\), let \(\mathbf P_{\rm NF}\in\{0,1\}^{K_{\rm NF}\times K}\) and \(\mathbf P_{\rm FF}\in\{0,1\}^{K_{\rm FF}\times K}\) denote the row-selection matrices associated with the near-field and far-field user sets, respectively, where \(K_{\rm NF}=|\mathcal K_{\rm NF}|\) and \(K_{\rm FF}=|\mathcal K_{\rm FF}|\).
	Similarly, let \(\mathbf Q_{\rm NF}\in\{0,1\}^{K\times K_{\rm NF}}\) and \(\mathbf Q_{\rm FF}\in\{0,1\}^{K\times K_{\rm FF}}\) denote the stream-selection matrices that extract the streams intended for near-field and far-field users, respectively.
	Define the row-selected mixed-field propagation blocks and the corresponding WMMSE weight blocks as
	$
		\widetilde{\mathbf H}_{\rm NF}
		\triangleq
		\mathbf P_{\rm NF}\mathbf H_{\rm mix},
		\widetilde{\mathbf H}_{\rm FF}
		\triangleq
		\mathbf P_{\rm FF}\mathbf H_{\rm mix},
		\label{eq:task_H_blocks}
		\mathbf \Omega_{\rm NF}
		\triangleq
		\mathbf P_{\rm NF}\mathbf \Omega \mathbf P_{\rm NF}^{\mathsf H},
		\mathbf \Omega_{\rm FF}
		\triangleq
		\mathbf P_{\rm FF}\mathbf \Omega \mathbf P_{\rm FF}^{\mathsf H}.
		\label{eq:task_Omega_blocks}
	$
	Let \(\mathbf Y=\mathbf H_{\rm mix}\mathbf E_{\rm tr}\).
	Then define the four block components
	\begin{align}
		\mathbf Y_{\rm NN}
		&\triangleq \mathbf P_{\rm NF}\mathbf Y\mathbf Q_{\rm NF}, \quad
		\mathbf Y_{\rm NF}
		\triangleq \mathbf P_{\rm NF}\mathbf Y\mathbf Q_{\rm FF},
		\label{eq:Y_NF}\\
		\mathbf Y_{\rm FN}
		&\triangleq \mathbf P_{\rm FF}\mathbf Y\mathbf Q_{\rm NF}, \quad
		\mathbf Y_{\rm FF}
		\triangleq \mathbf P_{\rm FF}\mathbf Y\mathbf Q_{\rm FF},
		\label{eq:Y_FF}
	\end{align}
	and the desired diagonal blocks
	\begin{align}
		\mathbf Y_{{\rm des},{\rm NN}}
		&\triangleq \mathbf P_{\rm NF}\mathbf Y_{\rm des}\mathbf Q_{\rm NF},
		\label{eq:Ydes_NN}\\
		\mathbf Y_{{\rm des},{\rm FF}}
		&\triangleq \mathbf P_{\rm FF}\mathbf Y_{\rm des}\mathbf Q_{\rm FF}.
		\label{eq:Ydes_FF}
	\end{align}
	
	\begin{proposition}\label{prop:mixed_exact_split}
		When \(\mathbf H=\mathbf H_{\rm mix}\), the exact inner surrogate in \eqref{eq:J_inner} admits the decomposition
		\begin{align}\label{eq:task_exact_split}
			J_{\mathrm{in}}^{\rm mix}
			=
			&
			\underbrace{
				\left\|
				\mathbf \Omega_{\rm NF}^{1/2}
				\big(
				\mathbf Y_{\rm NN}-\mathbf Y_{{\rm des},{\rm NN}}
				\big)
				\right\|_F^2
			}_{J_{\rm NN}}
			\nonumber\\
			&+
			\underbrace{
				\left\|
				\mathbf \Omega_{\rm FF}^{1/2}
				\big(
				\mathbf Y_{\rm FF}-\mathbf Y_{{\rm des},{\rm FF}}
				\big)
				\right\|_F^2
			}_{J_{\rm FF}}
			\nonumber\\
			&+
			\underbrace{
				\left\|
				\mathbf \Omega_{\rm NF}^{1/2}\mathbf Y_{\rm NF}
				\right\|_F^2
			}_{J_{{\rm NF}\leftarrow{\rm FF}}}
			+
			\underbrace{
				\left\|
				\mathbf \Omega_{\rm FF}^{1/2}\mathbf Y_{\rm FN}
				\right\|_F^2
			}_{J_{{\rm FF}\leftarrow{\rm NF}}}.
		\end{align}
	\end{proposition}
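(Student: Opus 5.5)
The decomposition is a block-partition identity for a weighted Frobenius norm. It relies on three facts: $\mathbf \Omega$ is diagonal, $\mathbf Y_{\rm des}$ is diagonal, and the row and column selection matrices form complete orthogonal partitions. I would first reorder users so that $\mathbf H_{\rm mix}$ stacks its near-field rows first, as in \eqref{eq:H_mix}, and order the streams the same way. Then $\mathbf P\triangleq[\mathbf P_{\rm NF}^{\mathsf T},\mathbf P_{\rm FF}^{\mathsf T}]^{\mathsf T}$ and $\mathbf Q\triangleq[\mathbf Q_{\rm NF},\mathbf Q_{\rm FF}]$ are permutation matrices, hence unitary. They also satisfy $\mathbf P_{\rm NF}^{\mathsf H}\mathbf P_{\rm NF}+\mathbf P_{\rm FF}^{\mathsf H}\mathbf P_{\rm FF}=\mathbf I_K$ and $\mathbf Q_{\rm NF}\mathbf Q_{\rm NF}^{\mathsf H}+\mathbf Q_{\rm FF}\mathbf Q_{\rm FF}^{\mathsf H}=\mathbf I_K$. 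The argument does not actually need the reordering, only these two resolution-of-identity facts. I also assume the stream-to-user association is the identity, i.e., stream $k$ is intended for user $k$, as in \eqref{eq:yk_split}.

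\textbf{Step 1: row splitting.} Since $\mathbf \Omega$ is diagonal, it commutes with the row-selection projectors $\mathbf P_{\rm NF}^{\mathsf H}\mathbf P_{\rm NF}$ and $\mathbf P_{\rm FF}^{\mathsf H}\mathbf P_{\rm FF}$. Hence $\mathbf \Omega=\mathbf P_{\rm NF}^{\mathsf H}\mathbf \Omega_{\rm NF}\mathbf P_{\rm NF}+\mathbf P_{\rm FF}^{\mathsf H}\mathbf \Omega_{\rm FF}\mathbf P_{\rm FF}$. Writing $\mathbf R=\mathbf Y-\mathbf Y_{\rm des}$, this gives $J_{\rm in}^{\rm mix}=\mathrm{tr}(\mathbf R^{\mathsf H}\mathbf \Omega\mathbf R)=\|\mathbf \Omega_{\rm NF}^{1/2}\mathbf P_{\rm NF}\mathbf R\|_F^2+\|\mathbf \Omega_{\rm FF}^{1/2}\mathbf P_{\rm FF}\mathbf R\|_F^2$. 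The step uses only $\mathbf P_{\rm NF}\mathbf P_{\rm NF}^{\mathsf H}=\mathbf I$ and $\mathbf P_{\rm NF}\mathbf P_{\rm FF}^{\mathsf H}=\mathbf 0$, and likewise for the far-field block.

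\textbf{Step 2: column splitting.} For any matrix $\mathbf M$, $\|\mathbf M\|_F^2=\|\mathbf M\mathbf Q_{\rm NF}\|_F^2+\|\mathbf M\mathbf Q_{\rm FF}\|_F^2$, because $\mathbf Q_{\rm NF}\mathbf Q_{\rm NF}^{\mathsf H}+\mathbf Q_{\rm FF}\mathbf Q_{\rm FF}^{\mathsf H}=\mathbf I_K$ and the trace is cyclic. Applying this to each term of Step 1 produces four blocks of the form $\|\mathbf \Omega_{a}^{1/2}\mathbf P_a\mathbf R\mathbf Q_b\|_F^2$ with $a,b\in\{\rm NF,FF\}$. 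Here $\mathbf P_a\mathbf R\mathbf Q_b=\mathbf Y_{ab}-\mathbf P_a\mathbf Y_{\rm des}\mathbf Q_b$, using the block definitions of $\mathbf Y_{\rm NN}$, $\mathbf Y_{\rm NF}$, $\mathbf Y_{\rm FN}$ and $\mathbf Y_{\rm FF}$.

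\textbf{Step 3: vanishing off-diagonal targets.} This is the only step that needs care. I would show that $\mathbf P_{\rm NF}\mathbf Y_{\rm des}\mathbf Q_{\rm FF}=\mathbf 0$ and $\mathbf P_{\rm FF}\mathbf Y_{\rm des}\mathbf Q_{\rm NF}=\mathbf 0$. By \eqref{eq:Ydes_def}, $\mathbf Y_{\rm des}$ is diagonal, so its $(k,i)$ entry is nonzero only when $k=i$. A near-field row $k$ and a far-field stream $i$ can never coincide, because $\mathcal K_{\rm NF}$ and $\mathcal K_{\rm FF}$ partition the users, which is implicit in the row-wise definition \eqref{eq:H_mix}. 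The diagonal targets are exactly $\mathbf Y_{{\rm des},{\rm NN}}$ and $\mathbf Y_{{\rm des},{\rm FF}}$ from \eqref{eq:Ydes_NN}--\eqref{eq:Ydes_FF}.

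Collecting the four terms then yields \eqref{eq:task_exact_split}. The main obstacle is bookkeeping: I would state explicitly that the user sets are disjoint and that rows and streams share the same indexing. Without these, the off-diagonal targets need not vanish and the cross-regime leakage terms would not reduce to pure interference norms.
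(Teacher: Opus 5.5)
Your proposal is correct and follows essentially the same route as the paper. Both partition the weighted Frobenius norm in \eqref{eq:J_inner} by the NF/FF user rows and stream columns, and use the diagonality of $\mathbf Y_{\rm des}$ so the off-diagonal targets vanish and the cross blocks reduce to pure leakage norms; you make explicit what the paper's brief proof leaves implicit, namely the diagonality of $\mathbf \Omega$ for the row split, the resolution-of-identity relations for $\mathbf P_{\rm NF},\mathbf P_{\rm FF},\mathbf Q_{\rm NF},\mathbf Q_{\rm FF}$, the disjointness of $\mathcal K_{\rm NF}$ and $\mathcal K_{\rm FF}$, and the common user/stream indexing.
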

	
	\begin{proof}
		Since \(\mathbf Y_{\rm des}\) is diagonal and each stream is associated with exactly one user, the desired mixed-field coupling matrix is block diagonal after the NF/FF row-column permutation.
		Partitioning the weighted Frobenius norm in \eqref{eq:J_inner} according to the row sets \(\mathcal K_{\rm NF},\mathcal K_{\rm FF}\) and the column sets associated with NF/FF streams directly yields \eqref{eq:task_exact_split}.
	\end{proof}
	
	Proposition~\ref{prop:mixed_exact_split} shows that the mixed-field exact surrogate is naturally composed of two desired fitting blocks and two cross-regime leakage blocks.
	Motivated by this structure, we define \emph{the task-aware mixed-field extension}
	\begin{align}\label{eq:task_obj}
		J_{\rm task}
		\triangleq &
		\alpha_{\rm NN}J_{\rm NN}
		+
		\alpha_{\rm FF}J_{\rm FF} \nonumber\\
		&+
		\beta_{{\rm NF}\leftarrow{\rm FF}}J_{{\rm NF}\leftarrow{\rm FF}}
		+
		\beta_{{\rm FF}\leftarrow{\rm NF}}J_{{\rm FF}\leftarrow{\rm NF}},
	\end{align}
	where \(\alpha_{\rm NN},\alpha_{\rm FF}\ge 1\) are task-balancing weights and \(\beta_{{\rm NF}\leftarrow{\rm FF}},\beta_{{\rm FF}\leftarrow{\rm NF}}\ge 1\) are cross-regime leakage-control weights.
	
	To adapt these weights, define the normalized block energies
	\begin{align}
		\bar j_{\rm NN}
		&\triangleq \frac{J_{\rm NN}}{\max\{1,K_{\rm NF}\}}, \quad\quad
		\bar j_{\rm FF}
		\triangleq \frac{J_{\rm FF}}{\max\{1,K_{\rm FF}\}},
		\label{eq:barj_FF}\\
		\bar j_{{\rm NF}\leftarrow{\rm FF}}
		&\triangleq \frac{J_{{\rm NF}\leftarrow{\rm FF}}}{\max\{1,K_{\rm NF}\}},
		\bar j_{{\rm FF}\leftarrow{\rm NF}}
		\triangleq \frac{J_{{\rm FF}\leftarrow{\rm NF}}}{\max\{1,K_{\rm FF}\}}.
		\label{eq:barj_FFleak}
	\end{align}
	
	Then we choose
	\begin{align}
		\alpha_{\rm NN}^{(i)}
		&=
		1+\rho_{\rm task}
		\frac{\bar j_{\rm NN}^{(i)}}
		{\bar j_{\rm NN}^{(i)}+\bar j_{\rm FF}^{(i)}},
		\label{eq:task_alpha_NN}\\
		\alpha_{\rm FF}^{(i)}
		&=
		1+\rho_{\rm task}
		\frac{\bar j_{\rm FF}^{(i)}}
		{\bar j_{\rm NN}^{(i)}+\bar j_{\rm FF}^{(i)}},
		\label{eq:task_alpha_FF}\\
		\beta_{{\rm NF}\leftarrow{\rm FF}}^{(i)}
		&=
		1+\rho_{\rm leak}
		\frac{\bar j_{{\rm NF}\leftarrow{\rm FF}}^{(i)}}
		{\bar j_{{\rm NF}\leftarrow{\rm FF}}^{(i)}+\bar j_{{\rm FF}\leftarrow{\rm NF}}^{(i)}},
		\label{eq:task_beta_NF}\\
		\beta_{{\rm FF}\leftarrow{\rm NF}}^{(i)}
		&=
		1+\rho_{\rm leak}
		\frac{\bar j_{{\rm FF}\leftarrow{\rm NF}}^{(i)}}
		{\bar j_{{\rm NF}\leftarrow{\rm FF}}^{(i)}+\bar j_{{\rm FF}\leftarrow{\rm NF}}^{(i)}},
		\label{eq:task_beta_FF}
	\end{align}
	where \(0\le \rho_{\rm task}\le 1\), \(0\le \rho_{\rm leak}\le 1\).
	
	When \(\rho_{\rm task}=0\) and \(\rho_{\rm leak}=0\), one has
	$
	\alpha_{\rm NN}=\alpha_{\rm FF}=\beta_{{\rm NF}\leftarrow{\rm FF}}=\beta_{{\rm FF}\leftarrow{\rm NF}}=1,
	$
	and therefore
	$\label{eq:task_reduction}
	J_{\rm task}=J_{\mathrm{in}}^{\rm mix}.
	$
	Hence, the mixed-field task-aware extension is a strict generalization of the exact-core mixed-field solver rather than a separate heuristic objective.
	For the task-aware mixed-field extension, define the block residuals
	\begin{align}
		\mathbf R_{\rm NN}
		\triangleq
		\mathbf Y_{\rm NN}-\mathbf Y_{{\rm des},{\rm NN}},
		\mathbf R_{\rm FF}
		\triangleq
		\mathbf Y_{\rm FF}-\mathbf Y_{{\rm des},{\rm FF}}.
		\label{eq:R_FF}
	\end{align}
	
	Then the corresponding aperture-field gradient is
	\begin{align}\label{eq:task_grad}
		\mathbf G_E^{\rm task}
		=
		&
		\alpha_{\rm NN}
		\widetilde{\mathbf H}_{\rm NF}^{\mathsf H}
		\mathbf \Omega_{\rm NF}
		\mathbf R_{\rm NN}
		\mathbf Q_{\rm NF}^{\mathsf H}
		+
		\alpha_{\rm FF}
		\widetilde{\mathbf H}_{\rm FF}^{\mathsf H}
		\mathbf \Omega_{\rm FF}
		\mathbf R_{\rm FF}
		\mathbf Q_{\rm FF}^{\mathsf H}
		\nonumber\\
		&+
		\beta_{{\rm NF}\leftarrow{\rm FF}}
		\widetilde{\mathbf H}_{\rm NF}^{\mathsf H}
		\mathbf \Omega_{\rm NF}
		\mathbf Y_{\rm NF}
		\mathbf Q_{\rm FF}^{\mathsf H}
		\nonumber\\
		&+
		\beta_{{\rm FF}\leftarrow{\rm NF}}
		\widetilde{\mathbf H}_{\rm FF}^{\mathsf H}
		\mathbf \Omega_{\rm FF}
		\mathbf Y_{\rm FN}
		\mathbf Q_{\rm NF}^{\mathsf H}.
	\end{align}
	
	The task-aware update is then
	\begin{equation}\label{eq:task_update}
	\mathbf E_{\mathrm{tr}}^{(+)}
	\leftarrow
	\mathbf E_{\mathrm{tr}}
	-
	\eta_{\rm task}\mathbf G_E^{\rm task}.
	\end{equation}
	
	\begin{proposition}\label{prop:mixed_task_descent}
		For fixed \(\alpha_{\rm NN},\alpha_{\rm FF},\beta_{{\rm NF}\leftarrow{\rm FF}},\beta_{{\rm FF}\leftarrow{\rm NF}}>0\), define
		\begin{align}\label{eq:task_L}
			L_{\rm task}
			\triangleq\;
			&
			\left(\alpha_{\rm NN}+\beta_{{\rm NF}\leftarrow{\rm FF}}\right)
			\left\|
			\widetilde{\mathbf H}_{\rm NF}^{\mathsf H}
			\mathbf \Omega_{\rm NF}
			\widetilde{\mathbf H}_{\rm NF}
			\right\|_2
			\nonumber\\
			&+
			\left(\alpha_{\rm FF}+\beta_{{\rm FF}\leftarrow{\rm NF}}\right)
			\left\|
			\widetilde{\mathbf H}_{\rm FF}^{\mathsf H}
			\mathbf \Omega_{\rm FF}
			\widetilde{\mathbf H}_{\rm FF}
			\right\|_2 .
		\end{align}
		
		Then, for any \(0<\eta_{\rm task}\le 1/L_{\rm task}\), the unprojected update in \eqref{eq:task_update} yields a non-increasing sequence of \(J_{\rm task}\).
	\end{proposition}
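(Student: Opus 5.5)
The plan is to mirror the proof of Proposition~\ref{prop:descent_E}: show that $J_{\rm task}$ is a convex quadratic in $\mathbf E_{\rm tr}$ whose Wirtinger gradient is $\mathbf G_E^{\rm task}$ in \eqref{eq:task_grad}, and whose curvature term is bounded by $L_{\rm task}\|\mathbf D\|_F^2$.

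\emph{Step 1: write each block as a quadratic in $\mathbf E_{\rm tr}$.} Using $\mathbf Y=\mathbf H_{\rm mix}\mathbf E_{\rm tr}$, we have $\mathbf Y_{\rm NN}=\widetilde{\mathbf H}_{\rm NF}\mathbf E_{\rm tr}\mathbf Q_{\rm NF}$ and $\mathbf Y_{\rm NF}=\widetilde{\mathbf H}_{\rm NF}\mathbf E_{\rm tr}\mathbf Q_{\rm FF}$, with analogous expressions for the FF rows. So each of the four terms in \eqref{eq:task_obj} has the form $\|\mathbf \Omega_b^{1/2}(\widetilde{\mathbf H}_b\mathbf E\mathbf Q_c-\mathbf C)\|_F^2$ for a constant $\mathbf C$.

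\emph{Step 2: exact second-order expansion.} For a perturbation $\mathbf D$, each term expands exactly as value $+\,2\Re\langle\cdot,\mathbf D\rangle+\|\mathbf \Omega_b^{1/2}\widetilde{\mathbf H}_b\mathbf D\mathbf Q_c\|_F^2$. The linear coefficient is $\widetilde{\mathbf H}_b^{\mathsf H}\mathbf \Omega_b(\cdot)\mathbf Q_c^{\mathsf H}$. Summing these coefficients with the fixed weights gives exactly $\mathbf G_E^{\rm task}$, which confirms \eqref{eq:task_grad}. Hence
\[
J_{\rm task}(\mathbf E+\mathbf D)=J_{\rm task}(\mathbf E)+2\Re\langle\mathbf G_E^{\rm task},\mathbf D\rangle+\mathcal Q(\mathbf D),
\]
where $\mathcal Q(\mathbf D)$ is the weighted sum of the four quadratic forms.

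\emph{Step 3: bound the curvature.} This is the main point to handle carefully. Because the entries of $\mathbf Q_{\rm NF},\mathbf Q_{\rm FF}$ lie in $\{0,1\}$ and each selects distinct columns, they have orthonormal columns and $\|\mathbf M\mathbf Q_c\|_F\le\|\mathbf M\|_F$. Therefore
\[
\|\mathbf \Omega_{\rm NF}^{1/2}\widetilde{\mathbf H}_{\rm NF}\mathbf D\mathbf Q_c\|_F^2\le\|\mathbf \Omega_{\rm NF}^{1/2}\widetilde{\mathbf H}_{\rm NF}\mathbf D\|_F^2=\mathrm{tr}\big(\mathbf D^{\mathsf H}\widetilde{\mathbf H}_{\rm NF}^{\mathsf H}\mathbf \Omega_{\rm NF}\widetilde{\mathbf H}_{\rm NF}\mathbf D\big)\le\|\widetilde{\mathbf H}_{\rm NF}^{\mathsf H}\mathbf \Omega_{\rm NF}\widetilde{\mathbf H}_{\rm NF}\|_2\|\mathbf D\|_F^2 .
\]
The same bound holds for the FF blocks. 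Both NF terms carry weights $\alpha_{\rm NN}$ and $\beta_{{\rm NF}\leftarrow{\rm FF}}$, and both FF terms carry $\alpha_{\rm FF}$ and $\beta_{{\rm FF}\leftarrow{\rm NF}}$. This gives $\mathcal Q(\mathbf D)\le L_{\rm task}\|\mathbf D\|_F^2$ with $L_{\rm task}$ as in \eqref{eq:task_L}. The bound is loose, since $\mathbf Q_{\rm NF}\mathbf Q_{\rm NF}^{\mathsf H}+\mathbf Q_{\rm FF}\mathbf Q_{\rm FF}^{\mathsf H}=\mathbf I$ would allow a tighter constant, but looseness suffices here.

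\emph{Step 4: conclude.} Set $\mathbf D=-\eta_{\rm task}\mathbf G_E^{\rm task}$. Then
\[
J_{\rm task}(\mathbf E^{(+)})\le J_{\rm task}(\mathbf E)-\eta_{\rm task}(2-L_{\rm task}\eta_{\rm task})\|\mathbf G_E^{\rm task}\|_F^2\le J_{\rm task}(\mathbf E)-\eta_{\rm task}\|\mathbf G_E^{\rm task}\|_F^2
\]
for $0<\eta_{\rm task}\le 1/L_{\rm task}$. Iterating this one-step decrease gives a non-increasing sequence of $J_{\rm task}$.

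The weights are held fixed throughout, as the statement assumes. If they are adapted between iterations via \eqref{eq:task_alpha_NN}--\eqref{eq:task_beta_FF}, monotonicity is claimed only within each fixed-weight stage.
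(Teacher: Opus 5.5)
Your proof is correct and follows the same route as the paper, which argues in one line that $J_{\rm task}$ is a quadratic in $\mathbf E_{\rm tr}$ whose curvature is bounded by $L_{\rm task}$ because $\|\mathbf Q_{\rm NF}\|_2=\|\mathbf Q_{\rm FF}\|_2=1$, and then invokes the descent lemma. You have written these steps out explicitly, mirroring Proposition~\ref{prop:descent_E}, including the check that \eqref{eq:task_grad} is the Wirtinger gradient and the observation that the statement covers only fixed weights.
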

	
	\begin{proof}
		For fixed task-aware weights, \(J_{\rm task}\) is a smooth quadratic function of \(\mathbf E_{\rm tr}\), and \eqref{eq:task_L} is a valid Lipschitz bound of its gradient since \(\|\mathbf Q_{\rm NF}\|_2=\|\mathbf Q_{\rm FF}\|_2=1\). The claim then follows directly from the descent lemma.
	\end{proof}
	
	The extension in \eqref{eq:task_obj} is therefore a structured refinement of the exact-core mixed-field surrogate in \eqref{eq:task_exact_split}, where the \(\beta\)-weights mainly suppress cross-regime leakage and the \(\alpha\)-weights provide only a mild task-balancing correction.

	The task-aware extension is activated only when \(\mathbf H=\mathbf H_{\rm mix}\) and both \(\mathcal K_{\rm NF}\) and \(\mathcal K_{\rm FF}\) are nonempty, otherwise, the exact-core update in \eqref{eq:Etr_update} is used.
	
	\subsubsection{Hardware-Native Updates for \((\mathbf A,\mathbf T)\)}\label{subsec:hardware_native}
	
	Given the updated target field \(\mathbf E_{\mathrm{tr}}^{(+)}\) and the incident field samples \(\mathbf E_{\mathrm{inc}}\), each T-RIS coefficient is first fitted by a regularized least-squares (LS) estimate. Since the \(n\)-th row satisfies
	$
	\mathbf e_{{\rm tr},n}^{(+)}
	\approx
	t_n\mathbf e_{{\rm inc},n},
	$
	we solve
	$\label{eq:tn_LS_prob}
		t_n^{\mathrm{LS}}
		=
		\arg\min_{t_n}
		\left\|
		t_n\mathbf e_{{\rm inc},n}
		-
		\mathbf e_{{\rm tr},n}^{(+)}
		\right\|_2^2 .
	$
	The unconstrained solution is
	$\label{eq:tn_LS}
		t_n^{\mathrm{LS}}
		=
		\frac{
			\mathbf e_{{\rm tr},n}^{(+)}
			\mathbf e_{{\rm inc},n}^{\mathsf H}
		}{
			\|\mathbf e_{{\rm inc},n}\|_2^2
		},
	$
	where \(\mathbf e_{{\rm inc},n}\) and \(\mathbf e_{{\rm tr},n}^{(+)}\) denote the \(n\)-th rows of \(\mathbf E_{\mathrm{inc}}\) and \(\mathbf E_{\mathrm{tr}}^{(+)}\), respectively.
	
	For practical calibrated hardware, this projection is naturally implemented as a nearest-state selection over the feasible set $\mathcal T_n$. Specifically, for any $z\in\mathbb C$, we define
	$\label{eq:T_state_proj}
		\Pi_{\mathcal T_n}(z)
		\triangleq
		\arg\min_{\tau\in\mathcal T_n}
		|z-\tau|^2 .
	$
	Then the T-RIS update is given by
	\begin{equation}\label{eq:T_proj}
		t_n
		\leftarrow
		\Pi_{\mathcal T_n}
		\big(
		t_n^{\mathrm{LS}}
		\big),
		\qquad
		\mathbf T\leftarrow \mathrm{diag}(t_1,\ldots,t_N).
	\end{equation}

	
	After updating \(\mathbf T\), map the desired transmitted field back to a desired incident field via the regularized diagonal pseudo-inverse
	$\label{eq:T_pinv}
		\mathbf T^\dagger
		\triangleq
		\mathrm{diag}\!\left(
		\frac{t_1^*}{|t_1|^2},
		\ldots,
		\frac{t_N^*}{|t_N|^2}
		\right).
	$
	For unit-modulus T-RIS coefficients, \(\mathbf T^\dagger\) reduces to \(\mathbf T^{\mathsf H}\).
	Then,
	$\label{eq:Einc_des}
		\mathbf E_{\mathrm{inc}}^{\mathrm{des}}
		\approx
		\mathbf T^\dagger \mathbf E_{\mathrm{tr}}^{(+)} .
	$
	Then backpropagate this desired incident field to the feed plane by solving the regularized LS problem
	\begin{equation}\label{eq:S_des_ls}
		\mathbf S^{\mathrm{des}}
		=
		\arg\min_{\mathbf S}
		\left\|
		\mathbf G\mathbf S
		-
		\mathbf E_{\mathrm{inc}}^{\mathrm{des}}
		\right\|_F^2
		+
		\lambda_G\|\mathbf S\|_F^2,
	\end{equation}
	whose closed-form solution is
	$\label{eq:S_des}
		\mathbf S^{\mathrm{des}}
		\leftarrow
		(\mathbf G^{\mathsf H}\mathbf G+\lambda_G \mathbf I)^{-1}
		\mathbf G^{\mathsf H}\mathbf E_{\mathrm{inc}}^{\mathrm{des}},
	$
	where \(\lambda_G>0\) is a Tikhonov regularization parameter used to stabilize the inversion of the feed-illumination operator and to avoid excessive feed excitation power.
	Since \(\mathbf S=\mathbf A\mathbf W\), a regularized LS fit gives
	$
	\mathbf A^{\mathrm{LS}}
	=
	\mathbf S^{\mathrm{des}}\mathbf W^\dagger,
	$
	where
	$
	\mathbf W^\dagger
	\triangleq
	\mathbf W^{\mathsf H}
	(\mathbf W\mathbf W^{\mathsf H}+\delta \mathbf I)^{-1},
	$
	and $\delta>0$ is a small Tikhonov regularization parameter used to stabilize the right pseudo-inverse of $\mathbf W$.
	
	Finally,
	\begin{equation}\label{eq:A_proj}
		\mathbf A
		\leftarrow
		\Pi_{\mathcal A_\beta}
		\big(
		\mathbf A^{\mathrm{LS}}
		\big).
	\end{equation}
	
	For \(\beta\)-bit constant-modulus phase shifters,
	$\label{eq:A_beta_proj}
		[\Pi_{\mathcal A_\beta}(\mathbf A)]_{m,r}
		=
		\frac{1}{\sqrt M}\exp\!\big(jQ_\beta(\arg A_{m,r})\big),
	$
	where \(Q_\beta(\cdot)\) maps a continuous phase to its nearest point in the \(\beta\)-bit phase alphabet \(\{2\pi q/2^\beta\}_{q=0}^{2^\beta-1}\).
	
	\subsection{Feasibility Preservation and Accepted-Iterate Monotonicity}\label{subsec:accept}
	
	Since the continuous aperture-field update is performed in the surrogate space, it does not directly satisfy the discrete hardware constraints on \(\mathbf T\) and \(\mathbf A\). The following hardware-native updates therefore use regularized least-squares fitting and feasible-set projection as a practical approximation, rather than exact minimization of \eqref{eq:J_inner}.
	
	\subsubsection{Transmit-power feasibility scaling}\label{subsubsec:feasible_scale}
	
	After updating \((\mathbf A,\mathbf T)\) while keeping \(\mathbf W\) fixed, the power constraint may be violated because the analog gain changes.
	We therefore enforce feasibility through
	\begin{equation}\label{eq:W_scale}
		\zeta
		\leftarrow
		\min\!\left(
		1,\sqrt{
			\frac{P_{\max}}
			{\mathrm{tr}(\mathbf W^{\mathsf H}\mathbf A^{\mathsf H}\mathbf A\mathbf W)}
		}
		\right),
		\qquad
		\mathbf W\leftarrow \zeta\,\mathbf W.
	\end{equation}
	
	\subsubsection{Acceptance and damping of outer iterates}\label{subsubsec:bsum}
	
	At the beginning of each outer iteration, we store the last accepted
	hardware state as
	$
	(\mathbf A_{\rm old},\mathbf T_{\rm old})
	\leftarrow
	(\mathbf A,\mathbf T).
	$
	After the inner hardware loop, let
	\((\mathbf A^+,\mathbf T^+)\) denote the candidate hardware state, where
	\(\mathbf T^+=\mathrm{diag}(t_1^+,\ldots,t_N^+)\).
	Since the hardware projection may break the monotonic decrease of the
	continuous surrogate and may reduce the true WSR, we damp the candidate
	update by
	\begin{align}\label{eq:damp}
		\mathbf A(\alpha)
		&=
		\Pi_{\mathcal A_\beta}
		\big(
		(1-\alpha)\mathbf A_{\rm old}+\alpha \mathbf A^+
		\big), \nonumber\\
		t_n(\alpha)
		&=
		\Pi_{\mathcal T_n}
		\big(
		(1-\alpha)t_{n,{\rm old}}+\alpha t_n^+
		\big),
		\quad n=1,\ldots,N .
	\end{align}
	Starting from \(\alpha=1\), we repeatedly set
	\(\alpha\leftarrow\beta_{\rm acc}\alpha\), with
	\(0<\beta_{\rm acc}<1\), until the WSR improves.
	
	
	Hence, after feasibility scaling via \eqref{eq:W_scale} and damping via \eqref{eq:damp}, the WSR sequence over the \emph{accepted outer iterates} is monotonically non-decreasing. 
%
	
	%
	%
	
	\subsection{Overall Algorithm and Complexity Discussion}\label{subsec:overall_algo}
	
	The outer loop is terminated when the relative WSR improvement satisfies
	$ \label{eq:stop_outer}
	\frac{
		|\mathrm{WSR}^{(t)}-\mathrm{WSR}^{(t-1)}|
	}{
		\max\{1,\mathrm{WSR}^{(t-1)}\}
	}
	\le \epsilon_{\rm out},
	$
	or when the maximum number of outer iterations \(T_{\rm out}\) is reached.
	The inner loop is stopped when
	$ \label{eq:stop_inner}
	\frac{
		|J_{\mathrm{in}}^{(i)}-J_{\mathrm{in}}^{(i-1)}|
	}{
		\max\{1,J_{\mathrm{in}}^{(i-1)}\}
	}
	\le \epsilon_{\rm in},
	$
	or when the maximum number of inner iterations \(T_{\rm in}\) is reached.
	The whole algorithm is shown in Algorithm~\ref{alg:tri_hybrid}.
	
	\begin{algorithm}[t]
		\small
		\caption{Regime-Instantiated WMMSE-Driven Joint Tri-Hybrid Optimization}\label{alg:tri_hybrid}
		\begin{algorithmic}[1]
			\REQUIRE Instantiated propagation operator \(\mathbf H\in\{\mathbf H^{\rm FF},\mathbf H^{\rm NF},\mathbf H_{\rm mix}\}\), \(\mathbf G\), user weights \(\{\mu_k\}\), noise variance \(\sigma^2\), power budget \(P_{\max}\), feasible sets \(\mathcal A_\beta\) and \(\{\mathcal T_n\}\), iteration limits \(T_{\rm out},T_{\rm in}\).
			\ENSURE \(\mathbf W,\mathbf A,\mathbf T\)
			\STATE Initialize feasible \(\mathbf A\in\mathcal A_\beta\), \(\mathbf T\in\prod_{n=1}^{N}\mathcal T_n\), and \(\mathbf W\) such that \(\|\mathbf A\mathbf W\|_F^2\le P_{\max}\).
			\FOR{$t=1$ to $T_{\rm out}$}
			\STATE \(\mathbf H_{\rm eff}\leftarrow \mathbf H\mathbf T\mathbf G\mathbf A\).
			\STATE Update \(\{u_k,q_k\}\) via \eqref{eq:u_opt}.
			\STATE Update \(\mathbf W\) via \eqref{eq:W_closed}.
			\STATE Build \(\mathbf \Omega\) and \(\mathbf Y_{\mathrm{des}}\) via \eqref{eq:Omega_def}--\eqref{eq:Ydes_def}.
			\STATE Store the last accepted \((\mathbf A_{\rm old},\mathbf T_{\rm old})\leftarrow (\mathbf A,\mathbf T)\).
			\FOR{$i=1$ to $T_{\rm in}$}
			\STATE \(\mathbf S\leftarrow \mathbf A\mathbf W\), \(\mathbf E_{\mathrm{inc}}\leftarrow \mathbf G\mathbf S\), \(\mathbf E_{\mathrm{tr}}\leftarrow \mathbf T\mathbf E_{\mathrm{inc}}\).
			\IF{\(\mathbf H=\mathbf H_{\rm mix}\) and \(K_{\rm NF}>0\) and \(K_{\rm FF}>0\)}
			\STATE Compute \(\mathbf Y_{\rm NN},\mathbf Y_{\rm NF},\mathbf Y_{\rm FN},\mathbf Y_{\rm FF}\) via \eqref{eq:Y_NF}--\eqref{eq:Y_FF}.
			\STATE Update \(\alpha_{\rm NN},\alpha_{\rm FF},\beta_{{\rm NF}\leftarrow{\rm FF}},\beta_{{\rm FF}\leftarrow{\rm NF}}\) via \eqref{eq:task_alpha_NN}--\eqref{eq:task_beta_FF}.
			\STATE Compute \(\mathbf G_E\leftarrow \mathbf G_E^{\rm task}\) via \eqref{eq:task_grad}.
			\STATE Set \(\eta \leftarrow 1/L_{\rm task}\) with \(L_{\rm task}\) given by \eqref{eq:task_L}.
			\ELSE
			\STATE \(\mathbf R\leftarrow \mathbf H\mathbf E_{\mathrm{tr}}-\mathbf Y_{\mathrm{des}}\), \(\mathbf G_E\leftarrow \mathbf H^{\mathsf H}\mathbf \Omega\,\mathbf R\).
			\STATE Choose \(\eta\) via Armijo backtracking in \eqref{eq:armijo}.
			\ENDIF
			\STATE \(\mathbf E_{\mathrm{tr}}^{(+)}\leftarrow \mathbf E_{\mathrm{tr}}-\eta\mathbf G_E\).
			\STATE Update \(\mathbf T\) via \eqref{eq:T_proj}.
			\STATE Update \(\mathbf A\) via \eqref{eq:A_proj}.  
			\ENDFOR
			\STATE Enforce power feasibility via \eqref{eq:W_scale}.
			\STATE If WSR decreases, apply rollback via \eqref{eq:damp}.
			\ENDFOR
		\end{algorithmic}
	\end{algorithm}
	
	For the outer layer, the dominant cost comes from \eqref{eq:W_closed}, which scales as \(\mathcal O(R^3)\).
	For the inner layer, the dominant operations are applications of \(\mathbf H\), \(\mathbf H^{\mathsf H}\), \(\mathbf G\), and \(\mathbf G^{\mathsf H}\).
	In the Fresnel regime, \(\mathbf H^{\rm NF}\) can be implemented through FFT-based Fresnel or angular-spectrum transforms, yielding approximately \(\mathcal O(N\log N)\) per operator application.
	In the Fraunhofer regime, \(\mathbf H^{\rm FF}\) reduces to a Fourier-type operator, whose implementation is typically no more expensive and can in some cases be further simplified.
	For mixed-field operation, the additional overhead of the task-aware extension is limited to block extraction, scalar weight updates, and blockwise residual bookkeeping, which is negligible compared with the dominant operator applications.
	

	
	\section{Simulation Results}\label{sec:sim}
	
	The default parameters are summarized in Table~\ref{tab:sim_params_filled}. 
	We compare the following six schemes.  
	\emph{\textbf{1) No-RIS Hybrid:}} a hybrid
	precoding baseline without the T-RIS aperture, where the same RF chains,
	feed elements, analog network, and power budget are retained.   
	\emph{\textbf{2) Only Field Fitting:}} the same physical front end as the proposed method, but with $\mathbf\Omega=\mathbf I$ in \eqref{eq:J_inner}, i.e., without WSR-driven weighting.  
	\emph{\textbf{3) $\mathbf T$-only:}} optimize only $\mathbf T$ while fixing $\mathbf A$ to a random feasible initialization.  
	\emph{\textbf{4) $\mathbf A$-only:}} optimize only $\mathbf A$ while fixing $\mathbf T$ to an all-pass feasible initialization.  
	\emph{\textbf{5) Generic alternating optimization (AO):}} use the same outer-loop WMMSE update, but replace the inner hardware update by a generic projected block-coordinate alternating optimization over $\mathbf A$ and $\mathbf T$. This baseline follows the general alternating-optimization philosophy commonly used in tri-hybrid beamforming designs \cite{Li2026RCRAATriHybrid}.
	\emph{\textbf{6) Proposed design:}} jointly optimize $(\mathbf W,\mathbf A,\mathbf T)$ using Algorithm~\ref{alg:tri_hybrid}.    
	
	\begin{table}[t]
		\centering
		\caption{Default simulation parameters.}
		\label{tab:sim_params_filled}
		\begin{tabular}{l l}
			\hline
			Carrier frequency & $f_c=28$~GHz \\
			Noise / SNR & $\sigma^2=1$, $\mathrm{SNR}=10\log_{10}(P_{\max}/\sigma^2)$ \\
			T-RIS aperture & $32\lambda\times 32\lambda$ \\
			Sampling / cells & $d_s=\lambda/2$, $N_x=N_y=64$, $N=4096$ \\
			Rayleigh distance & $d_R=2048\lambda$ \\
			Feed array & $M=4\times4$, spacing $2\lambda$, plane $z=-6\lambda$ \\
			RF chains / users & $R=8$, $K=8$, $\mu_k=1$ \\
			User angles & $\varphi\sim[-60^\circ,60^\circ]$, $\theta\sim[20^\circ,70^\circ]$ \\
			User distance & $[1500\lambda , 10000\lambda]$ \\
			T-RIS quantization & $b_T=2$ bits, $|t_n|=1$ \\
			Analog quantization & $\beta=3$ bits, $|A_{m,r}|=1/\sqrt M$ \\
			Solver iterations & $T_{\rm out}=30$, $T_{\rm in}=8$ \\
			MC trials & $N_{\rm mc}=200$ \\
			\hline
		\end{tabular}
	\end{table}
	
	
	We first quantify the discretization error of the sampled aperture model. For a fixed $32\lambda\times32\lambda$ aperture, different sampling pitches are compared with a dense numerical reference with $d_{\rm ref}=\lambda/8$. Let $(\mathbf y_{\rm ref},R_{\rm ref})$ and $(\mathbf y,R)$ denote the received signal vector and WSR obtained from the reference and tested discretizations under the same physical realization and transmit configuration, respectively. We define
	$
	\varepsilon_{\rm y}
	\triangleq
	\mathbb E_{\rm MC}\left[
	\frac{\|\mathbf y_{\rm ref}-\mathbf y\|_2^2}
	{\|\mathbf y_{\rm ref}\|_2^2}
	\right],
	\varepsilon_{\rm WSR}
	\triangleq
	\mathbb E_{\rm MC}\left[
	\frac{|R_{\rm ref}-R|}{R_{\rm ref}}
	\right],
	$
	where $\mathbb E_{\rm MC}[\cdot]$ denotes the empirical average over Monte Carlo user realizations.
	Fig.~\ref{fig:disc_nmse} shows that both metrics exhibit a clear decreasing trend as the grid
	is refined. The default choice \(d_s=\lambda/2\) provides a suitable
	accuracy-complexity tradeoff for the subsequent simulations.

	\begin{figure}[t]
		\centering
		\includegraphics[width=0.85\columnwidth]{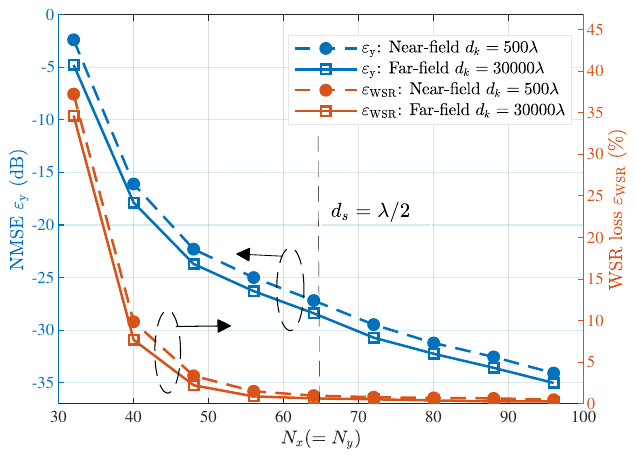}
		\caption{Validation of the continuous-to-discrete model \eqref{eq:RS_cont}-\eqref{eq:cascaded}. NMSE and induced WSR mismatch versus grid resolution for near-field and far-field users.}
		\label{fig:disc_nmse}
	\end{figure}
	
	Define the effective channel based on Rayleigh-Sommerfeld (RS) diffraction integral 
	$\mathbf H_{\rm eff}^{(\rm RS)}$, for $\rho\in\{\mathrm{NF},\mathrm{FF}\}$,  
	the model mismatch is measured by
	$
	\epsilon_H^{\rho}
	\triangleq
	\mathbb E_{\rm MC}
	\left[
	\frac{\|\mathbf H_{\rm eff }^{\rho}-\mathbf H_{\rm eff}^{\rm RS}\|_F^2}
	{\|\mathbf H_{\rm eff}^{\rm RS}\|_F^2}
	\right].
	$
	In addition, if $R_{\rho\rightarrow{\rm RS}}$ denotes the WSR obtained by designing the precoder from $\mathbf H_{\rm eff}^{(\rho)}$ and evaluating it on $\mathbf H_{\rm eff}^{(\rm RS)}$, the induced WSR loss is defined as
	$
	\Delta R_{\rho\rightarrow{\rm RS}}
	\triangleq
	100
	\frac{
		R_{{\rm RS}\rightarrow{\rm RS}}-R_{\rho\rightarrow{\rm RS}}
	}{
		R_{{\rm RS}\rightarrow{\rm RS}}
	}.
	$
	Fig.~\ref{fig:near_far_validation} shows that the Fresnel approximation remains close to the RS reference and induces negligible WSR loss over the tested range. 
	In contrast, the Fraunhofer approximation causes pronounced mismatch and WSR loss before and around $d_{\rm R}$, and becomes reliable only in the clear far-field.
	

	\begin{figure}[t]
		\vspace{-0.3cm}
		\centering
		\includegraphics[width=0.85\columnwidth]{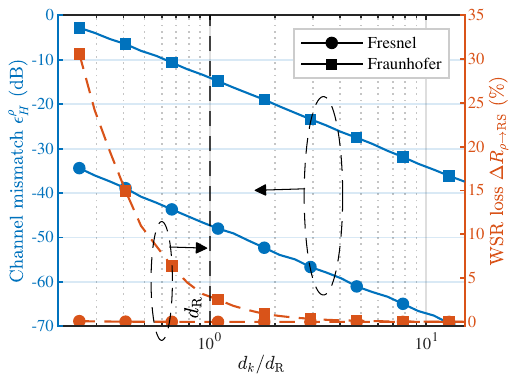}
		\caption{Validation of the Fresnel and Fraunhofer approximations against the RS reference versus the user distance.}
		\label{fig:near_far_validation}
	\end{figure}

	Fig.~\ref{fig:conv_wsr_jin} shows that the outer-loop WSR is monotone non-decreasing due to the rollback acceptance mechanism. Moreover, the proposed weighted inner update attains a higher WSR plateau than the unweighted counterpart with $\mathbf\Omega=\mathbf I$, confirming the benefit of WMMSE-aligned inner field shaping under quantized hardware constraints.
	
	\begin{figure}[t]
		\centering
		\includegraphics[width=0.775\columnwidth]{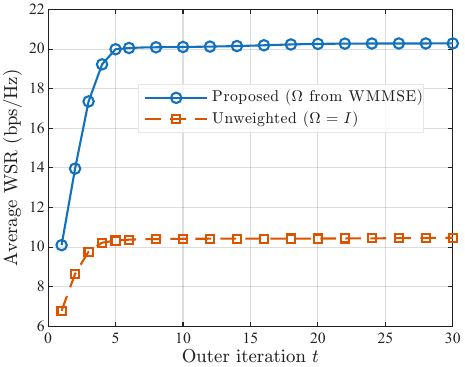}
		\caption{Convergence behavior of Algorithm~\ref{alg:tri_hybrid}. 
			}
		\label{fig:conv_wsr_jin}
	\end{figure}
	
	
	
	Fig.~\ref{fig:wsr_snr} evaluates the average WSR versus SNR in both pure near-field and mixed-field settings.  
	In both cases, the proposed full tri-hybrid design consistently outperforms all baselines. 
	The gaps to the No-RIS and only-field-fitting schemes verify the gain of the programmable aperture and WSR-aligned field shaping, while the advantage over generic AO confirms the benefit of the structured hardware update. 
	The slightly lower WSR in the mixed-field case reflects the coexistence penalty caused by heterogeneous spatial tasks sharing the same aperture and RF-chain resources.
	
	
	
	\begin{figure}[t]
		\centering
		
		\subfloat[Pure near-field users with $K_{\rm NF}=8$.]{
			\includegraphics[width=0.8\columnwidth]{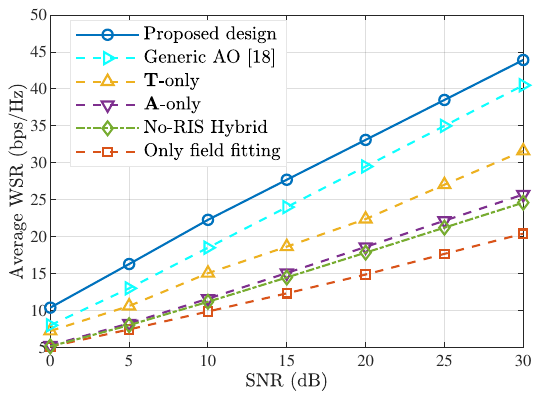}
		}\vfill
		\subfloat[Mixed-field users with $K_{\rm NF}=K_{\rm FF}=4$.]{
			\includegraphics[width=0.8\columnwidth]{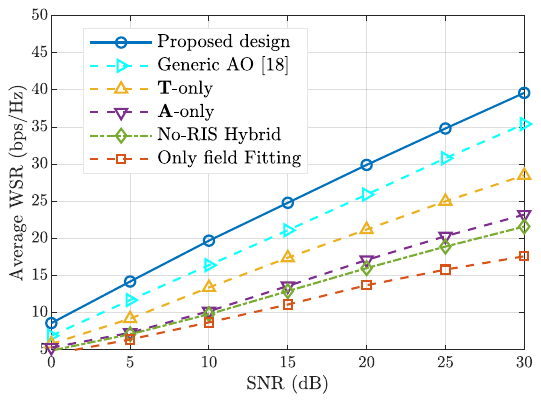}
		}
		
		\caption{Average WSR versus SNR.}
		\label{fig:wsr_snr}
		
		\vspace{-0.3cm}
	\end{figure}
	
	Fig.~\ref{fig:wsr_k} shows that proposed design remains the best scheme for all user loads and keeps a visible advantage over the generic AO baseline, especially as the system approaches the RF-limited regime. All curves gradually saturate with increasing \(K\), while the relative positions of \emph{$\mathbf T$-only} and \emph{$\mathbf A$-only} again indicate that T-RIS reconfigurability provides the main gain and analog adaptation offers a secondary improvement.
	\begin{figure}[t]
		\centering
		\includegraphics[width=0.83\columnwidth]{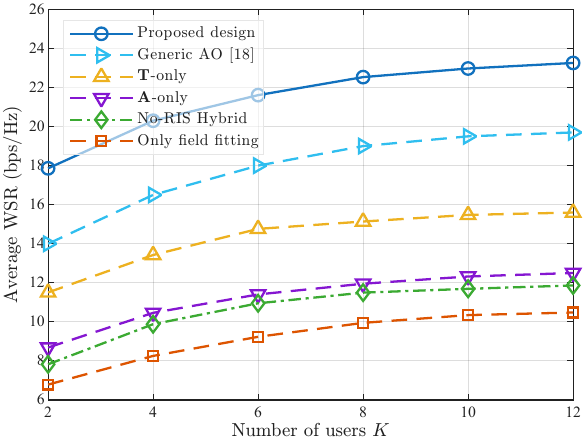}
		\caption{Average WSR versus the number of mixed-field users $K$.}
		\label{fig:wsr_k}
	\end{figure}
	
	To quantify the NF/FF coupling, we define the normalized cross-regime leakage ratio as
	$
		\frac{\|\mathbf Y_{\rm NF}\|_F^2+\|\mathbf Y_{\rm FN}\|_F^2}
		{\|\mathbf Y_{\rm NN}\|_F^2+\|\mathbf Y_{\rm FF}\|_F^2+
			\|\mathbf Y_{\rm NF}\|_F^2+\|\mathbf Y_{\rm FN}\|_F^2}.
		\label{eq:cross_regime_leakage}
	$
	Fig.~\ref{fig:mixed_field_taskaware} shows that ratio is zero at the pure-regime endpoints and increases in the mixed-field region, where near-field focusing and far-field steering compete for the same T-RIS aperture. 
	Compared with the exact-core mixed-field baseline, the task-aware extension suppresses the cross-regime leakage and consequently lifts the WSR valley around the balanced NF/FF composition. 
	
	\begin{figure}[t]
		\vspace{-0.1cm}
		\centering
		\includegraphics[width=0.9\columnwidth]{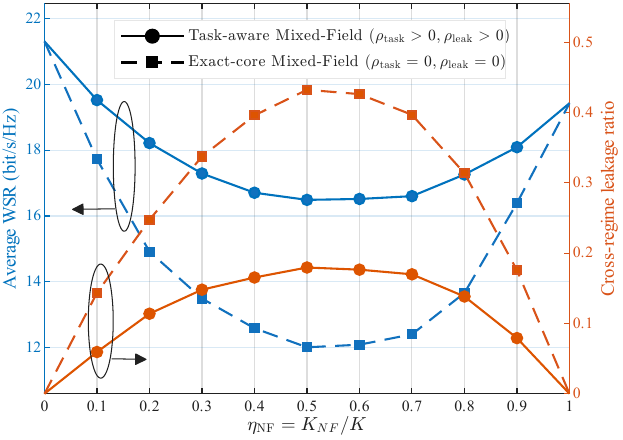}
		\caption{Verification of the mixed-field task-aware extension versus the near-field user fraction $\eta_{\rm NF}=K_{\rm NF}/K$.}
		\label{fig:mixed_field_taskaware}
		\vspace{-0.4cm}
	\end{figure}
	
	
	Fig.~\ref{fig_E1} illustrates the joint effect of the T-RIS phase resolution \(b_T\) and the analog phase resolution \(\beta\) on the average WSR. Increasing either resolution improves the performance, but the WSR varies more noticeably along the \(\beta\)-axis, indicating that the analog beamforming network remains the dominant quantization bottleneck in the considered RF-limited setting. In contrast, the gain from increasing \(b_T\) saturates earlier, implying that moderate T-RIS phase resolution is already sufficient to capture most of the aperture-side benefit.
	
	\begin{figure}[t]
		\centering
		\includegraphics[width=0.93\columnwidth]{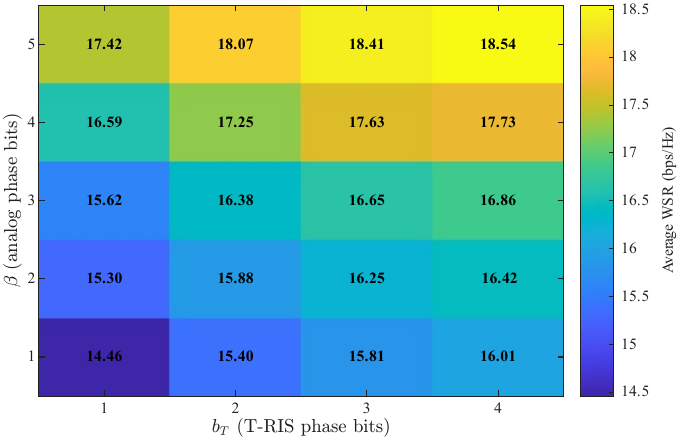}
		\caption{Quantization impact under the proposed tri-hybrid design.}
		\label{fig_E1}
	\end{figure}

	Fig.~\ref{fig:runtime_scaling} compares the average runtime per solver call of the proposed full tri-hybrid solver and the generic AO baseline. The proposed solver remains consistently faster in both dimension sweeps, and the gap becomes more pronounced as \(N\) and \(R\) increase. This indicates that the proposed structure-aware update achieves a more favorable complexity scaling than generic block-coordinate AO under the same tri-hybrid model and hardware constraints.
	
	\begin{figure}[!t]
		\vspace{-0.48cm}
		\centering
		\subfloat[]{
			\includegraphics[width=0.49\columnwidth]{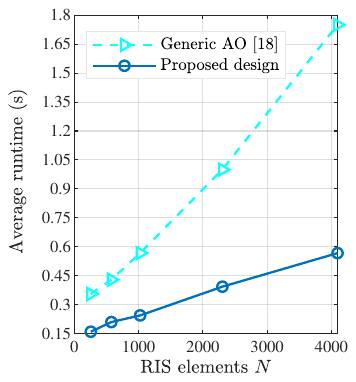}
			\label{fig:xxx_a}
 		}\hfill
		\subfloat[]{
			\hspace{-8pt}
			\includegraphics[width=0.463\columnwidth]{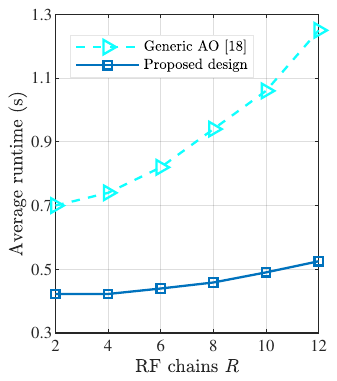}
			\label{fig:xxx_b}
		}
		\caption{(a) Average runtime versus the number of T-RIS elements $N$. (b) Average runtime versus the number of RF chains $R$ with the T-RIS size fixed.}
		\label{fig:runtime_scaling}
	\end{figure}

	\section{Conclusion}
	This paper investigated a transmitter-native T-RIS-aided tri-hybrid MU-MIMO downlink, where the digital precoder, analog RF network, and programmable transmissive aperture jointly shape the transmitted wavefront. An EM-aware framework was developed by bridging the Rayleigh-Sommerfeld continuous-field model with a cascaded baseband representation, and by unifying Fresnel and Fraunhofer operation under the same front-end factorization. To address the resulting nonconvex WSR maximization problem, a two-level algorithm was proposed, with outer-layer WMMSE digital precoding and inner-layer hardware-projected aperture-field updates for the analog network and T-RIS coefficients. Simulations verified the model accuracy, convergence behavior, and WSR gains over cost-fair and ablation baselines.
	
	\vspace{-0.3cm}
	
	\appendices
	\section{}\label{app:wmmse_ls}
	
	For user \(k\), the received signal is
	$
	y_k=\sum_{i=1}^{K}\mathbf h_k^{\mathsf H}\mathbf w_i s_i+n_k,
	n_k\sim\mathcal{CN}(0,\sigma^2),
	$
	and the scalar-equalized estimate is \(\hat s_k=u_k^*y_k\). Hence,
	\begin{align}\label{eq:app_mse}
		e_k
		=&
		\mathbb E\!\left[|\hat s_k-s_k|^2\right] \nonumber\\
		=&
		|u_k|^2
		\left(
		\sum_{j=1}^{K}|\mathbf h_k^{\mathsf H}\mathbf w_j|^2+\sigma^2
		\right)
		-2\Re\!\left\{u_k^*\mathbf h_k^{\mathsf H}\mathbf w_k\right\}
		+1 .
	\end{align}
	
	Define the user-plane coupling matrix
	$
	\mathbf Y
	\triangleq
	\mathbf H\mathbf T\mathbf G\mathbf A\mathbf W
	=
	\mathbf H_{\rm eff}\mathbf W,
	$
	whose \((k,i)\)-th entry is \(Y_{k,i}=\mathbf h_k^{\mathsf H}\mathbf w_i\). Then \eqref{eq:app_mse} becomes
	\begin{equation}\label{eq:app_mse_Y}
		e_k
		=
		|u_k|^2
		\left(
		\sum_{i=1}^{K}|Y_{k,i}|^2+\sigma^2
		\right)
		-2\Re\!\left\{u_k^*Y_{k,k}\right\}
		+1 .
	\end{equation}
	
	Multiplying by \(\mu_k q_k\) and summing over \(k\) yields
	\begin{align}\label{eq:app_weighted_sum}
		\sum_{k=1}^{K}\mu_k q_k e_k
		&=
		\sum_{k=1}^{K}\mu_k q_k |u_k|^2
		\sum_{i=1}^{K}|Y_{k,i}|^2
		-2\sum_{k=1}^{K}\mu_k q_k \Re\!\left\{u_k^*Y_{k,k}\right\}
		\nonumber\\
		&\quad
		+\sigma^2\sum_{k=1}^{K}\mu_k q_k |u_k|^2
		+\sum_{k=1}^{K}\mu_k q_k .
	\end{align}
	
	Now define
	$
	\mathbf \Omega
	\triangleq
	\mathrm{diag}\!\big(
	\mu_1q_1|u_1|^2,\ldots,\mu_Kq_K|u_K|^2
	\big),
	\mathbf Y_{\mathrm{des}}
	\triangleq
	\mathrm{diag}\!\left(
	\frac{1}{u_1^*},\ldots,\frac{1}{u_K^*}
	\right).
	$
	Then
	\begin{align}\label{eq:app_fro_expand}
		\left\|
		\mathbf \Omega^{1/2}
		(\mathbf Y-\mathbf Y_{\mathrm{des}})
		\right\|_F^2
		&=
		\sum_{k=1}^{K}\mu_k q_k |u_k|^2
		\sum_{i=1}^{K}|Y_{k,i}|^2
		\nonumber\\
		&
		-2\sum_{k=1}^{K}\mu_k q_k \Re\!\left\{u_k^*Y_{k,k}\right\}
		+\sum_{k=1}^{K}\mu_k q_k .
	\end{align}
	
	Combining \eqref{eq:app_weighted_sum} and \eqref{eq:app_fro_expand}, we obtain
	\begin{equation}\label{eq:app_equiv}
		\sum_{k=1}^{K}\mu_k q_k e_k
		=
		\left\|
		\mathbf \Omega^{1/2}
		(\mathbf Y-\mathbf Y_{\mathrm{des}})
		\right\|_F^2
		+
		\sigma^2\sum_{k=1}^{K}\mu_k q_k |u_k|^2 .
	\end{equation}
	
	Since \(\mathbf Y=\mathbf H\mathbf T\mathbf G\mathbf A\mathbf W\), the hardware-dependent part of the WMMSE objective is therefore
	\begin{equation}\label{eq:app_final}
		J_{\mathrm{in}}(\mathbf A,\mathbf T)
		=
		\left\|
		\mathbf \Omega^{1/2}
		\big(
		\mathbf H\mathbf T\mathbf G\mathbf A\mathbf W
		-
		\mathbf Y_{\mathrm{des}}
		\big)
		\right\|_F^2,
	\end{equation}
	which is exactly \eqref{eq:J_inner}. This completes the proof.

\end{document}